\documentclass[journal]{ieeeconf}                                                          
\IEEEoverridecommandlockouts

\usepackage{graphicx}
\usepackage{amssymb}
\usepackage{algorithmicx, algpseudocode}
\usepackage{comment}
\usepackage{algorithm}
\usepackage{algpseudocode}
\usepackage{amsmath} % assumes amsmath package installed
\usepackage{algorithmicx}
\renewcommand{\baselinestretch}{.985}

\newcommand{\oprocendsymbol}{\hbox{$\bullet$}}
\newcommand{\oprocend}{\relax\ifmmode\else\unskip\hfill\fi\oprocendsymbol}

\newtheorem{thm}{Theorem}[section]

\newtheorem{rem}{Remark}[section]

\newtheorem{lem}{Lemma}[section]

\newtheorem{assump}{Assumption}



\begin{document}\fontsize{10}{10.89}\rm

\title{Scalar Communication via Random Direction Refreshing for \\Distributed Optimization}

\author{Mohammadreza Rostami and Solmaz S. Kia, \emph{Senior member, IEEE} %
  \thanks{This work is supported by the 2026 UCLA–UCI grant awarded by the Samueli Foundation. The authors are with the Department of Mechanical and Aerospace Engineering, University of California Irvine, Irvine, CA 92697,  
    {\tt\small \{mrostam2,solmaz\}@uci.edu.}}%
}

\markboth{}%
{}
\maketitle
\begin{abstract}
Distributed optimization over networks requires agents to repeatedly exchange decision variables with their neighbors. When the decision dimension $d$ is large, these exchanges dominate the communication cost, which is critical for bandwidth-constrained agents. Existing remedies quantize or sparsify the exchanged vectors, yet each message still scales with $d$ and the compression error must be compensated by additional states. To address this limitation, we propose a scalar-communication mechanism in which every neighbor message carries a single real number regardless of $d$. Agents regenerate a common random direction from a shared seed, transmit only the inner product of their state with that direction, and act on the resulting rank-one surrogate of their neighbors' states while retaining full local gradients. We develop and analyze the mechanism for an existing continuous-time distributed optimization algorithm. For strongly convex local costs with Lipschitz gradients, we show that the optimizer remains the unique consensus equilibrium, that a fixed direction admits spurious equilibria, and that refreshing the direction at a sufficiently high rate yields exponential mean-square and almost-sure convergence with constant gains and no residual error. The framework admits any isotropic fixed-norm direction distribution, including Rademacher, scaled-coordinate, and sphere-normalized Gaussian directions; all three attain lower fresh-encoding variance than unnormalized Gaussian directions. The effects of the
direction distribution and the refresh interval are illustrated in~simulations.
\end{abstract}

\noindent\textbf{Keywords:} Distributed optimization, bandwidth-limited communication, scalar communication, random projections, multi-agent systems.
\vspace{-0.2in}
\IEEEpeerreviewmaketitle

\section{Introduction}
\label{sec:introduction}
\vspace{-0.05in}
Distributed optimization enables a network of $N$ agents to
cooperatively solve
\begin{equation}\label{eq:global_opt}
  \min_{z\in\mathbb{R}^{d}} F(z)=\sum\nolimits_{i=1}^{N} f^i(z),
\end{equation}
where each local objective $f^i$ is private to agent~$i$, using
only local computation and communication with neighbors. Problems
of this form arise in cooperative control, distributed estimation,
resource allocation, and networked learning, and a mature set of
discrete- and continuous-time algorithms with strong convergence
guarantees is now available; see,
e.g.,~\cite{nedic2009distributed,qu2017harnessing,nedic2018distributed,yang2019survey}.

Despite these advances, communication remains a key impediment to
deploying distributed optimization on resource-constrained networks \cite{rostami2026adaptive}.
Communication cost has two distinct sources. The first is
\emph{energy}: in battery-powered systems, every transmission
depletes a limited budget, so agents should communicate only when
needed. The second is \emph{bandwidth}: the channel limits how much
information a single message can carry. This is a hard constraint
that holds regardless of battery status and can only be met by
reducing the size of each message. Event- and self-triggered
mechanisms~\cite{dimarogonas2012event,seyboth2013event,SSK-JC-SM:15optim}
address the energy side by reducing how often agents communicate,
but each message still contains a full $d$-dimensional state or
gradient. Quantization and sparsification reduce the number of bits
or entries sent, but the exchanged objects remain $d$-dimensional
and require auxiliary error-compensation or tracking states. These
limitations become severe when the decision variable is high
dimensional, $d\gg1$. This paper addresses the bandwidth constraint
by developing a distributed optimization algorithm whose per-message
cost is independent of~$d$.

Our approach builds on a scalar encoding principle from our recent
federated learning work~\cite{rostami2024fedscalar,rostami2026scalar, rostami2025fedmpdd}.
For any random direction $u\in\mathbb{R}^d$ with
$\mathbb{E}[uu^\top]=I_d$, a vector $g\in\mathbb{R}^d$ satisfies
\begin{equation}\label{eq:decomposition}
  g=\mathbb{E}\big[\,\underbrace{u}_{\text{vector}}\,
     \underbrace{u^\top g}_{\text{scalar}}\,\big],
\end{equation}
so a sender that shares a random seed with the receiver need only
transmit the scalar $u^\top g$, from which the receiver forms the
unbiased estimate $(u^\top g)\,u$. We apply this principle to the
states exchanged between neighbors in consensus-based distributed
optimization. All agents regenerate a common direction from a shared
seed, each agent transmits the projection of its state onto that
direction, and every agent keeps using its full local gradient. No
direction vector is transmitted, and no compression-error memory is
required.

\emph{Related Work:} In bandwidth-limited settings, the central
concern is the size of each transmitted message, and two major
approaches have been developed to address it. The first reduces the
number of bits used to represent each entry of the exchanged vector.
Building on quantized
consensus~\cite{kashyap2007quantized,carli2010gossip,li2011distributed},
the effect of quantization on distributed subgradient methods was
first characterized in~\cite{nedic2008quantization}. Subsequent work
established data-rate requirements for
convergence~\cite{yi2014quantized}, designed quantizers tailored to
optimization dynamics~\cite{pu2017quantization}, and analyzed
convergence under limited
communication~\cite{magnusson2018convergence}. With a fixed
quantizer, these methods generally converge to a neighborhood of the
optimizer, whereas adaptive quantization that progressively shrinks
the quantization range can recover exact
convergence~\cite{doan2021fast,reisizadeh2019exact}, achieving
linear rates for smooth strongly convex
objectives~\cite{kajiyama2021linear}. However, the communicated
object remains a $d$-dimensional vector, so the message length still
scales with the problem dimension. The second approach reduces the
number of transmitted entries. In distributed learning, QSGD showed
that aggressive quantization trades communication against estimator
variance~\cite{alistarh2017qsgd}, which motivated sparsification
methods that retain only a subset of entries and use error memory to
preserve discarded
information~\cite{stich2018memory,karimireddy2019error}. Koloskova
\emph{et al.}~\cite{koloskova2019choco} extend this idea to
decentralized networks, where agents maintain auxiliary estimates of
their neighbors' states to accommodate biased and unbiased
compressors. Compressed gradient-tracking and push--pull methods add
tracking variables to achieve linear convergence on directed
networks and nonconvex
objectives~\cite{liao2022compressed,liao2024robust}. The closest
schemes to our approach are random sparsification operators that
retain $k$ of the $d$
entries~\cite{stich2018memory,koloskova2019choco}. With $k=1$, the
rescaled operator $d\cdot\mathrm{rand}_1$ is equivalent to the
scaled coordinate-direction encoder considered in this paper. In
these methods, however, the selected indices are transmitted
together with the retained values~\cite{stich2018memory}, and the
resulting compression error is compensated using either error
memory~\cite{stich2018memory} or auxiliary neighbor
estimates~\cite{koloskova2019choco}. In summary, existing
communication-efficient methods either exchange messages whose
length scales with the problem dimension or rely on memory,
tracking, or auxiliary states to compensate for compression error.
In contrast, our scheme regenerates the communication direction, and
hence any selected index, from a shared seed, so each message
consists of a single real number and no compensating state is
required.

\emph{Contributions:} This paper introduces a scalar-communication mechanism for distributed optimization in which neighboring agents exchange a single real number per message, regardless of the decision dimension. The mechanism is developed for the continuous-time coordination algorithm of~\cite{SSK-JC-SM:15optim}, although it is applicable to any distributed optimization scheme whose inter-agent coupling depends only on relative states. By generating common random directions from a shared seed, agents communicate only scalar projections of their states, eliminating the need to transmit direction vectors or maintain compression-error memory. We show that the optimal consensus equilibrium of the original algorithm is preserved under scalar communication and remains the unique equilibrium on the zero-sum integral-state subspace. For strongly convex local objectives with globally Lipschitz gradients, we establish exponential mean-square and almost-sure convergence to the optimizer using constant gains, provided the direction-refresh interval is sufficiently small. The resulting refresh condition explicitly reveals how graph connectivity, objective regularity, and direction statistics influence convergence. A key challenge is that the communication direction is held fixed between refreshes. Although the resulting encoder is unbiased at each refresh instant, the held direction becomes coupled with the evolving network state over time. We address this challenge through a Lyapunov analysis that combines conditional drift at refresh times with bounds on the evolution between refreshes. Finally, we characterize the fresh-encoding variance of unnormalized Gaussian, Rademacher, scaled-coordinate, and spherical directions. The convergence theorem applies to the latter three fixed-norm distributions; scaled-coordinate directions recover random-coordinate communication as a special case. Compared to our prior work on scalar communication for federated learning~\cite{rostami2024fedscalar,rostami2026scalar}, this paper extends the framework from client-server aggregation to distributed neighbor-to-neighbor interactions, where the communicated quantities are dynamically coupled through the network.
% Optional, if proofs are omitted:
% For space reasons, proofs are omitted and will appear elsewhere.

\section{Problem Formulation and Baseline Algorithm}
\label{sec:problem}
\vspace{-0.05in}
We consider problem~\eqref{eq:global_opt} over a network of $N$
agents that communicate over a connected, undirected graph
$\mathcal{G}=(\mathcal{V},\mathcal{E})$ with symmetric adjacency
matrix $A=[a_{ij}]$, where $a_{ij}=a_{ji}>0$ if $(i,j)\in\mathcal{E}$
and $a_{ij}=0$ otherwise, and Laplacian~$L$.

\begin{assump}\label{assump:cost}
Each local cost $f^i:\mathbb{R}^d\to\mathbb{R}$ is private to
agent~$i$, continuously differentiable, and strongly convex, with a
globally Lipschitz continuous gradient. Let $m_f>0$ and $\ell_f$
denote a common lower bound on the strong convexity parameters and
a common upper bound on the gradient Lipschitz constants,
respectively.
\end{assump}

Under Assumption~\ref{assump:cost}, $F$ has a unique minimizer
$z^\star$. Several continuous-time algorithms solve
problem~\eqref{eq:global_opt} over connected graphs, including the
control-theoretic dynamics of Wang and
Elia~\cite{wang2010control,wang2011control} and their
variants~\cite{gharesifard2014distributed,SSK-JC-SM:15optim}.
These algorithms couple agents only through relative states
$x^i-x^j$, which is precisely the quantity our scalar encoding
targets. The encoding therefore applies to any of them, although
convergence must be established for each. For concreteness, we
develop and analyze it for the algorithm of Kia
\emph{et al.}~\cite{SSK-JC-SM:15optim}:
\begin{subequations}\label{eq:orig}
\begin{align}
  \dot{v}^i &= \alpha\beta \sum\nolimits_{j=1}^N a_{ij}\,(x^i - x^j),
  \label{eq:orig_v}\\
  \dot{x}^i &= -\alpha \nabla f^i(x^i)
               - \beta \sum\nolimits_{j=1}^N a_{ij}\,(x^i - x^j) - v^i,
  \label{eq:orig_x}
\end{align}
\end{subequations}
where $\alpha,\beta>0$ are design gains, $x^i\in\mathbb{R}^d$ is
agent~$i$'s estimate of $z^\star$, and $v^i\in\mathbb{R}^d$ is an
integral correction variable initialized so that
$\sum_{i=1}^N v^i(0)=0$. The following result establishes
convergence of this baseline.

\begin{thm}[See Theorem 6 of \cite{SSK-JC-SM:15optim}]\label{thm:kia}
% --- Paste the statement from Kia et al. here and match notation ---
Let Assumption~\ref{assump:cost} hold and let $\mathcal{G}$ be
connected and undirected. Then, for any $\alpha,\beta>0$ and any
initial condition satisfying $\sum_{i=1}^N v^i(0)=0$, the
trajectories of \eqref{eq:orig} converge
exponentially fast to $x^i=z^\star$ and
$v^i=-\alpha\nabla f^i(z^\star)$ for all $i\in\mathcal{V}$.
\end{thm}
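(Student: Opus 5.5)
We will work in stacked form. Set $\mathbf{x}=\mathrm{col}(x^1,\dots,x^N)$, $\mathbf{v}=\mathrm{col}(v^i)$, $\mathbf{L}=L\otimes I_d$ and $\nabla\mathbf{f}(\mathbf{x})=\mathrm{col}(\nabla f^i(x^i))$. Then \eqref{eq:orig} reads $\dot{\mathbf{v}}=\alpha\beta\mathbf{L}\mathbf{x}$ and $\dot{\mathbf{x}}=-\alpha\nabla\mathbf{f}(\mathbf{x})-\beta\mathbf{L}\mathbf{x}-\mathbf{v}$.

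\emph{Step 1: invariant subspace and equilibrium.} Since $\mathbf{1}^\top L=0$, the sum $\sum_i v^i$ is constant, so the set $\sum_i v^i=0$ is invariant.

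Now take an equilibrium on this set. From $\dot{\mathbf{v}}=0$ we get $\mathbf{L}\mathbf{x}=0$. Connectivity then gives consensus, $x^i=z$ for all $i$. From $\dot{\mathbf{x}}=0$ we get $v^i=-\alpha\nabla f^i(z)$. Summing over $i$ and using $\sum_i v^i=0$ gives $\nabla F(z)=0$, so $z=z^\star$.

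The equilibrium is therefore unique on the subspace and is exactly the one stated.

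\emph{Step 2: error coordinates.} Define $\bar{\mathbf{x}}=\mathbf{x}-\mathbf{1}\otimes z^\star$ and $\bar{\mathbf{v}}=\mathbf{v}+\alpha\nabla\mathbf{f}(\mathbf{1}\otimes z^\star)$. Set $\mathbf{h}=\nabla\mathbf{f}(\mathbf{x})-\nabla\mathbf{f}(\mathbf{1}\otimes z^\star)$, which satisfies $\bar{\mathbf{x}}^\top\mathbf{h}\ge m_f\|\bar{\mathbf{x}}\|^2$ and $\|\mathbf{h}\|\le \ell_f\|\bar{\mathbf{x}}\|$. The error dynamics are
\[
\dot{\bar{\mathbf{x}}}=-\alpha\mathbf{h}-\beta\mathbf{L}\bar{\mathbf{x}}-\bar{\mathbf{v}},\qquad \dot{\bar{\mathbf{v}}}=\alpha\beta\mathbf{L}\bar{\mathbf{x}}.
\]
Because $\sum_i\nabla f^i(z^\star)=0$, we have $(\mathbf{1}^\top\otimes I)\bar{\mathbf{v}}=0$.

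Next decompose with an orthonormal basis $[\tfrac{1}{\sqrt N}\mathbf{1},R]$ of $\mathbb{R}^N$. The $\bar{\mathbf{v}}$-component along $\mathbf{1}$ vanishes, and the remaining part $\mathbf{w}=(R^\top\otimes I)\bar{\mathbf{v}}$ evolves under $\dot{\mathbf{w}}=\alpha\beta(R^\top LR\otimes I)\bar{\mathbf{x}}_\perp$, where $R^\top LR\succ0$ has eigenvalues $\lambda_2,\dots,\lambda_N$.

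\emph{Step 3: Lyapunov function.} We will use a quadratic function with a cross term, in the style of Kia et al.:
\[
V=\tfrac{a}{2}\|\bar{\mathbf{x}}\|^2+\tfrac{b}{2}\|\bar{\mathbf{x}}+\tfrac{1}{\beta}\bar{\mathbf{v}}\|^2+\tfrac{c}{2}\mathbf{w}^\top(R^\top LR\otimes I)^{-1}\mathbf{w},
\]
with $a,b,c>0$. The shift $\bar{\mathbf{x}}+\bar{\mathbf{v}}/\beta$ is chosen so that the $\mathbf{L}$ terms cancel in its derivative: $\tfrac{d}{dt}(\bar{\mathbf{x}}+\bar{\mathbf{v}}/\beta)=-\alpha\mathbf{h}-\bar{\mathbf{v}}-\beta\mathbf{L}\bar{\mathbf{x}}+\alpha\mathbf{L}\bar{\mathbf{x}}$. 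If this is unbalanced, we instead choose the weight $\alpha$ versus $\beta$ to produce the cancellation.

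Then differentiate $V$ and bound the indefinite cross terms in $\bar{\mathbf{x}}$, $\mathbf{h}$ and $\mathbf{w}$ by Young's inequality, using $\|\mathbf{h}\|\le\ell_f\|\bar{\mathbf{x}}\|$, strong monotonicity, and $\lambda_2\|\mathbf{w}\|^2\le \mathbf{w}^\top(R^\top LR\otimes I)\mathbf{w}$. The goal is $\dot V\le -\mu V$ for some $\mu>0$. Exponential convergence of $(\bar{\mathbf{x}},\bar{\mathbf{v}})$ to zero follows, since $\bar{\mathbf{v}}$ is fully determined by $\mathbf{w}$.

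\emph{Main obstacle.} The difficulty is choosing $a,b,c$ so that the resulting $3\times3$ matrix in $(\|\bar{\mathbf{x}}\|,\|\mathbf{w}\|,\|\mathbf{h}\|)$ is negative definite for \emph{every} $\alpha,\beta>0$, with no gain restriction.

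I would first use a cleaner route. Write the linearization-free system as $\dot{\xi}=A\xi-\mathbf{B}\mathbf{h}$. Use the cross term $\bar{\mathbf{x}}^\top\bar{\mathbf{v}}$ scaled by a small $\epsilon$ to generate damping on $\mathbf{w}$: its derivative contributes $-\epsilon\|\mathbf{w}\|^2$. The price is terms bounded by $\epsilon(\alpha\ell_f+\beta\lambda_N)\|\bar{\mathbf{x}}\|\|\mathbf{w}\|$. These are absorbed for $\epsilon$ small against the $-\alpha m_f\|\bar{\mathbf{x}}\|^2$ term that comes from $\tfrac12\|\bar{\mathbf{x}}\|^2+\tfrac{1}{2\alpha\beta}\mathbf{w}^\top(R^\top LR\otimes I)^{-1}\mathbf{w}$. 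In that base function the $\mathbf{L}$ coupling cancels exactly, so its derivative is $-\alpha\bar{\mathbf{x}}^\top\mathbf{h}-\beta\bar{\mathbf{x}}^\top\mathbf{L}\bar{\mathbf{x}}\le-\alpha m_f\|\bar{\mathbf{x}}\|^2$.

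With $\epsilon$ small, positive definiteness of $V$ and strict decay both hold for any $\alpha,\beta>0$, which completes the argument.
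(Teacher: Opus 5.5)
Your final (``cleaner'') route is correct and matches the paper's own Lyapunov machinery. The paper cites this result from Kia et al.\ rather than proving it, but its function $V_\varepsilon$ in \eqref{eq:V}, specialized to full-vector exchange $P=I_d$, is exactly your $\tfrac12\|\bar{\mathbf{x}}\|^2+\tfrac{1}{2\alpha\beta}\mathbf{w}^\top(\hat L^{-1}\otimes I_d)\mathbf{w}+\epsilon\,\bar{\mathbf{x}}^\top\bar{\mathbf{v}}$ (note $\bar{\mathbf{x}}^\top\bar{\mathbf{v}}=z_2^\top w_2$ because $w_1\equiv 0$), with $\epsilon$ restricted as in \eqref{eq:epsilon_choice}, and the paper's remark on dimension dependence notes that the same argument applies to full-vector exchange.

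One bookkeeping slip: differentiating the cross term also yields $+\epsilon\alpha\beta\,\bar{\mathbf{x}}^\top\mathbf{L}\bar{\mathbf{x}}$, which you did not list. It is harmless: either keep the $-\beta\bar{\mathbf{x}}^\top\mathbf{L}\bar{\mathbf{x}}$ you discarded and take $\epsilon\alpha\le\tfrac12$, as the paper does through the factor $1-\varepsilon\alpha$ in \eqref{eq:Vdot_decomp}, or absorb it into $-\alpha m_f\|\bar{\mathbf{x}}\|^2$ for small $\epsilon$. Your abandoned first candidate would need the shift $\bar{\mathbf{x}}+\bar{\mathbf{v}}/\alpha$ rather than $\bar{\mathbf{x}}+\bar{\mathbf{v}}/\beta$ for the Laplacian terms to cancel (cf.\ $q=w_2+\alpha z_2$ in Lemma~\ref{lem:growth}).
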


%To simplify the exposition, we set $\alpha=\beta=1$ in the rest of the paper; the analysis extends directly to general $\alpha,\beta>0$. 
Implementing
\eqref{eq:orig} requires each agent to receive
the full $d$-dimensional state $x^j$ from every neighbor.
Section~\ref{sec:scalar} replaces this exchange with a single scalar
per neighbor, reducing the per-message cost from $\mathcal{O}(d)$
to $\mathcal{O}(1)$.

\section{Scalar Communication via Random Encoding and Decoding}
\label{sec:scalar}
\vspace{-0.05in}
The idea is to enforce consensus by letting neighbors act on a
rank-one surrogate of each other's state rather than on the state
itself. Given a random direction $u\in\mathbb{R}^d$ with
$\mathbb{E}[uu^\top]=I_d$, the surrogate
\begin{equation}\label{eq:surrogate}
    y^i = (u^\top x^i)\,u
\end{equation}
is an unbiased estimate of $x^i$, since
$\mathbb{E}[y^i]=\mathbb{E}[uu^\top]x^i=x^i$. Its usefulness for
communication lies in how it factors: if the receiver already knows
$u$, then $y^i$ is fully determined by the scalar $u^\top x^i$. In
the coupling terms of \eqref{eq:orig}, we
therefore replace $x^i-x^j$ by $y^i-y^j=uu^\top(x^i-x^j)$, which
requires each agent to send one real number per message instead of
$d$.

For this to work, all agents must hold the same direction at the
same time without exchanging it. We let $u_k\in\mathbb{R}^d$ be
drawn independently of the past at the refresh times $t_k=k\tau$,
$\tau>0$, and held constant on $[t_k,t_{k+1})$. A common seed, the
same pseudorandom generator, and a synchronized index~$k$ allow
every agent to regenerate $u_k$ locally, so the direction is never
transmitted. During $[t_k,t_{k+1})$, agent~$i$ transmits the
\emph{scalar encoding}
\begin{equation}\label{eq:encoding}
    c^i(t) = u_k^\top x^i(t) \in \mathbb{R},
\end{equation}
and each neighbor decodes $y^i(t)=c^i(t)\,u_k$. We stress that $y^i$
is not a reconstruction of $x^i$ from a single scalar; it is the
component of $x^i$ visible along $u_k$, and it is the refreshing of
$u_k$ that exposes all components over time. The resulting
algorithm is
\begin{subequations}\label{eq:mod}
\begin{align}
    \dot{v}^i &= \alpha\beta \sum\nolimits_{j=1}^N a_{ij} (y^i - y^j),
    \label{eq:mod_v} \\
    \dot{x}^i &= -\alpha \nabla f^i(x^i)
                 - \beta \sum\nolimits_{j=1}^N a_{ij} (y^i - y^j) - v^i.
    \label{eq:mod_x}
\end{align}
\end{subequations}
The local gradient is evaluated at the actual state $x^i$; only the
neighbor coupling is encoded and decoded. Each message thus carries
$\mathcal{O}(1)$ rather than $\mathcal{O}(d)$ entries. In this
continuous-time model, $c^i(t)$ varies continuously within each
holding interval; the finite-bit cost of an implementation also
depends on sampling and quantization, which we do not address here.

Any distribution with 
$\mathbb{E}[u_ku_k^\top]=I_d$
  can be used in the unbiased encoding identity \eqref{eq:surrogate}; no normalization is required for that identity. The convergence analysis in Section III-C additionally requires 
$\|u_k\|^2=d$ almost surely.  We consider
three representative choices:
\begin{enumerate}
  \item[(G)] \emph{Gaussian}: $g_k\sim\mathcal N(0,I_d)$, used either
directly, $u_k=g_k$, or normalized to the sphere,
$u_k=\sqrt d\,g_k/\|g_k\|$;
  \item[(R)] \emph{Rademacher}: $u_k\in\{-1,+1\}^d$ with
    independent, equiprobable entries;
  \item[(C)] \emph{scaled coordinate}: $u_k=\sqrt{d}\,e_{J_k}$ with
    $J_k$ uniform on $\{1,\ldots,d\}$, so that agents exchange one
    randomly selected coordinate at a time.
\end{enumerate}
Rademacher, scaled-coordinate, and sphere-normalized Gaussian
directions satisfy both isotropy and the fixed-norm condition
required by Theorem~\ref{thm:convergence}. Unnormalized Gaussian
directions satisfy isotropy but are not covered by
Theorem~\ref{thm:convergence}; they are included below only for the
fresh-encoding variance comparison. Among the covered choices,
Rademacher directions mix all $d$ coordinates into every transmitted
scalar, whereas scaled-coordinate directions expose one coordinate
per transmission; Section~\ref{sec:numerical} shows that either can
be preferable depending on where the disagreement between agents
lies.

\subsection{Equilibrium Analysis}
\label{subsec:equilibrium}
\vspace{-0.05in}
We first verify that the optimizer $z^\star$ of
\eqref{eq:global_opt} remains an equilibrium of
\eqref{eq:mod}. Let $(\bar{x},\bar{v})$ denote a
constant equilibrium of the state and integral variables; the
decoded variables $\bar y^i=u_ku_k^\top\bar x^i$ may still change
when the direction changes.

For any fixed direction, setting $\dot{v}^i=0$ at equilibrium in \eqref{eq:mod_v}~yields
\begin{equation}
    \sum\nolimits_{j=1}^N a_{ij}(\bar{y}^i-\bar{y}^j)=0
    \ \Longleftrightarrow\ (L\otimes I_d)\bar{y}=0 .
\end{equation}
Since $\mathcal{G}$ is connected, the null space of $L$ is spanned by
$\mathbf{1}_N$, so the decoded variables reach consensus. This alone
does not imply $\bar{x}^i=z^\star$, so we verify the desired
equilibrium directly. Set
\begin{equation}\label{eq:equilibrium}
    \bar{x}^i=z^\star,\qquad
    \bar{v}^i=-\alpha\nabla f^i(z^\star).
\end{equation}
For every shared direction $u_k$, the decoded states satisfy
$\bar{y}^1=\cdots=\bar{y}^N=u_ku_k^\top z^\star$, so all neighbor
differences vanish and
\begin{equation}\label{eq:proj}
    \dot{v}^i=0,\qquad
    \dot{x}^i=-\alpha\nabla f^i(z^\star)-\bar{v}^i=0 .
\end{equation}
Because the graph is undirected, the sum of the integral variables is
conserved,
\begin{equation}\label{eq:zero_sum}
    \sum\nolimits_{i=1}^N v^i(t)=\sum\nolimits_{i=1}^N v^i(0)=0,
\end{equation}
and the proposed equilibrium respects this condition since
$\sum_{i}\bar{v}^i=-\alpha\sum_{i}\nabla f^i(z^\star)=0$, which is
precisely the first-order optimality condition for $F$. Thus the
optimal consensus state is an exact equilibrium for every shared
direction.

Conversely, if $\mathbb{E}[u_ku_k^\top]=I_d$, any constant
equilibrium common to almost every direction satisfies
$(L\otimes u_ku_k^\top)\bar x=0$ for almost every $u_k$; averaging
gives $(L\otimes I_d)\bar{x}=0$, so $\bar x^i$ is common to all
agents. Summing the equilibrium equations for $\dot{x}^i$ then gives
$\sum_i\nabla f^i(\bar{x}^i)=0$. Hence, on the zero-sum subspace
\eqref{eq:zero_sum}, the common equilibrium
\eqref{eq:equilibrium} is unique.

\subsection{Why a Fixed Direction Can Fail}
\label{subsec:fixed}
\vspace{-0.05in}
Although $z^\star$ is an equilibrium for every direction, holding a
single direction $u$ fixed for all time need not yield convergence
when $d>1$. Let $P=uu^\top$ denote the rank-one encoding--decoding
operator (an orthogonal projection only when $\|u\|=1$). The
coupling depends on the states only~through
\begin{equation}
    y^i-y^j=P(x^i-x^j),
\end{equation}
so if $x^i-x^j\in\operatorname{span}(u)^\perp$ for adjacent agents,
then $y^i-y^j=0$ and the coupling is blind to disagreement in the
remaining $d-1$ dimensions. A state profile is therefore a spurious
equilibrium whenever
\begin{align}
    x^i-x^j&\in\operatorname{span}(u)^\perp,
       &&\forall(i,j)\in\mathcal E,\\
    v^i&=-\alpha\nabla f^i(x^i),
       &\sum\nolimits_{i=1}^N\nabla f^i(x^i)&=0 .
\end{align}
For example, take two agents with $f^1(z)=\frac12\|z-a\|^2$ and
$f^2(z)=\frac12\|z+a\|^2$, where $a\ne0$ and $u^\top a=0$. The
optimizer is $z^\star=0$, yet $x^1=a$, $x^2=-a$, $v^1=v^2=0$ is a
nonconsensus equilibrium. A fixed direction thus offers no general
convergence guarantee, which motivates refreshing the direction.

% ------------------------------------------------------------------
% ------------------------------------------------------------------
% Revised Section III-C.  Uses the paper's labels (eq:mod, eq:mod_v,
% eq:mod_x, assump:cost, eq:global_opt) and environments thm, rem;
% a lemma environment `lem' is assumed (\newtheorem{lem}{Lemma}).
% ------------------------------------------------------------------
\subsection{Convergence under Direction Refreshing}
\label{subsec:convergence}
\vspace{-0.05in}
Let $u_k$ be redrawn independently from a fixed distribution at the
refresh times $t_k=k\tau$. Unbiasedness alone does not guarantee
convergence for arbitrary holding intervals, so we derive a positive
constant $\bar\tau_d$, determined by the problem data alone, such
that convergence is preserved for every holding interval
$\tau\in(0,\bar\tau_d]$. Any refresh period in this range is
admissible, so $\tau$ can be chosen as large as $\bar\tau_d$ to
refresh directions as infrequently as the guarantee allows. Since
the refresh times are periodic rather than state-triggered, no
Zeno-type issue arises.

Let $\mathcal F_k$ denote the history immediately before drawing
$u_k$, and assume
\begin{equation}\label{eq:isotropy}
    \mathbb E[u_ku_k^\top\mid\mathcal F_k]=I_d,\qquad
    \|u_k\|^2=d .
\end{equation}
The first condition holds for (G), (R), and (C). The second holds for
(R), (C), and normalized Gaussian directions, which remain isotropic
(Remark~\ref{rem:gauss}); it fails for unnormalized Gaussian
directions, whose squared norm $\|g_k\|^2$ is random. At a
refresh time the states are $\mathcal F_k$-measurable, so
\begin{align}
 \mathbb E[y^i(t_k^+)-y^j(t_k^+)\mid\mathcal F_k]
 &=\mathbb E[u_ku_k^\top\mid\mathcal F_k]\,(x^i(t_k)-x^j(t_k))\nonumber\\
 &=x^i(t_k)-x^j(t_k).
\end{align}
Within a holding interval, however, the states depend on the sampled
direction, so this factorization is available only at refresh times.
The analysis below controls this dependence through the component of
the state along $u_k$.

\emph{Error coordinates.}
Let $x=\operatorname{col}(x^1,\ldots,x^N)$ and define the stacked cost
$G(x)=\sum_{i=1}^N f^i(x^i)$, whose gradient stacks the local
gradients (in contrast to the common-argument cost $F$ in
\eqref{eq:global_opt}). Let
\begin{equation}
    x^\star=\mathbf1_N\otimes z^\star,\qquad
    v^\star=-\alpha\nabla G(x^\star),
\end{equation}
and define the errors $\tilde x=x-x^\star$, $\tilde v=v-v^\star$ and
$h=\nabla G(x)-\nabla G(x^\star)$. Following~\cite{SSK-JC-SM:15optim},
let $r=\mathbf1_N/\sqrt N$ and let $R\in\mathbb R^{N\times(N-1)}$
satisfy $r^\top R=0$, $R^\top R=I_{N-1}$, and
$RR^\top=I_N-\tfrac1N\mathbf1_N\mathbf1_N^\top$, and apply the
orthogonal change of variables
\begin{equation}\label{eq:cov}
 z=\big([r\ R]^\top\!\otimes I_d\big)\tilde x,\qquad
 w=\big([r\ R]^\top\!\otimes I_d\big)\tilde v,
\end{equation}
partitioned as $z=(z_1,z_2)$ and $w=(w_1,w_2)$ with
$z_1,w_1\in\mathbb R^d$. Here $z_1$ is the (scaled) average error and
$z_2$ the disagreement. Let
$\hat L=R^\top LR\in\mathbb R^{(N-1)\times(N-1)}$, which is symmetric
positive definite with eigenvalues $\lambda_2\le\cdots\le\lambda_N$
(the positive eigenvalues of $L$), and satisfies $LR=R\hat L$. Write
$\bar h=([r\ R]^\top\otimes I_d)h=(h_1,h_2)$, so that
$z^\top\bar h=\tilde x^\top h$ and $\|\bar h\|=\|h\|$.
Because $r^\top L=0$, along \eqref{eq:mod_v}--\eqref{eq:mod_x} we have
$\dot w_1=0$, and the initialization $\sum_iv^i(0)=0$ gives
$w_1(t)\equiv0$: this is the constant of motion identified
in~\cite{SSK-JC-SM:15optim}. The remaining error dynamics, with
$P=u_ku_k^\top$ on $[t_k,t_{k+1})$, are
\begin{subequations}\label{eq:err_dyn}
\begin{align}
 \dot z_1&=-\alpha h_1,\label{eq:z1}\\
 \dot z_2&=-\alpha h_2-\beta(\hat L\otimes P)z_2-w_2,\label{eq:z2}\\
 \dot w_2&=\alpha\beta(\hat L\otimes P)z_2 .\label{eq:w2}
\end{align}
\end{subequations}
The reduced state $\zeta=\operatorname{col}(z_1,z_2,w_2)$ has the
origin as its unique equilibrium common to almost every direction
realization on the zero-sum integral-state subspace; for an
individual fixed direction, additional spurious equilibria may
exist, as shown in Section~III-B. Since \eqref{eq:cov} is an
isometry, $\|\zeta\|^2=\|\tilde x\|^2+\|\tilde v\|^2$, and
convergence of $\zeta$ is equivalent to convergence of $(x,v)$ to
$(x^\star,v^\star)$. The zero eigenvalue of $L$ has been removed: no
invariant subspace needs to be tracked, and all Lyapunov functions
below are positive definite on the whole reduced state space.
\emph{Projected states.} On $[t_k,t_{k+1})$ let $\hat u=u_k/\sqrt d$,
so $P=d\,\hat u\hat u^\top$, and for $y\in\mathbb R^{(N-1)d}$ let
$y^\parallel=(I_{N-1}\otimes\hat u^\top)y\in\mathbb R^{N-1}$ collect
the components along $u_k$. Write $a=z_2^\parallel$,
$b=w_2^\parallel$, and $h^\parallel=h_2^\parallel$. Then
$(\hat L\otimes P)y=d\,(\hat Ly^\parallel)\otimes\hat u$ and, for
$\mathcal F_k$-measurable $y$,
\begin{equation}\label{eq:proj_moment}
 \|y^\parallel\|\le\|y\|,\qquad
 \mathbb E\big[\|y^\parallel\|^2\mid\mathcal F_k\big]=\|y\|^2/d .
\end{equation}
The projected dynamics of \eqref{eq:z2}--\eqref{eq:w2} are
\begin{equation}\label{eq:proj_dyn}
 \dot a=-\alpha h^\parallel-\beta d\,\hat La-b,\qquad
 \dot b=\alpha\beta d\,\hat La .
\end{equation}
The only $O(d)$ term in \eqref{eq:err_dyn}, $-\beta d(\hat La)\otimes\hat u$,
is dissipative. It acts only on $a$, whose mean square at a refresh
is $\|z_2\|^2/d$ by \eqref{eq:proj_moment}. The average error $z_1$
is not affected by the direction at all.

\emph{Lyapunov function.} Define
\begin{align}
 b_-&=\min\{1,1/(\alpha\beta\lambda_N)\},\qquad
 b_+=\max\{1,1/(\alpha\beta\lambda_2)\},\nonumber\\
 B&=\alpha\ell_f+\beta\lambda_N ,
\end{align}
choose
\begin{equation}\label{eq:epsilon_choice}
 0<\varepsilon\le\min\Big\{\frac{b_-}{2},\ \frac{\alpha m_f}{B^2},\
 \frac1{2\alpha}\Big\},
\end{equation}
and set $c_-=b_--\varepsilon$, $c_+=b_++\varepsilon$,
$c_0=\tfrac12\min\{\alpha m_f,\varepsilon\}$, and
$\kappa=1+\varepsilon\beta\lambda_N$. Consider
\begin{equation}\label{eq:V}
 V_\varepsilon=\frac12z^\top z
    +\frac1{2\alpha\beta}w_2^\top(\hat L^{-1}\otimes I_d)w_2
    +\varepsilon z_2^\top w_2 .
\end{equation}
Since $\lambda_N^{-1}I\preceq\hat L^{-1}\preceq\lambda_2^{-1}I$ and
$|\varepsilon z_2^\top w_2|\le\tfrac\varepsilon2\|\zeta\|^2$,
\begin{equation}\label{eq:V_bounds}
 \frac{c_-}{2}\|\zeta\|^2
 \leq V_\varepsilon(\zeta)
 \leq\frac{c_+}{2}\|\zeta\|^2 ,
\end{equation}
so $V_\varepsilon$ is positive definite and radially unbounded on the
reduced state space. Using this Lyapunov function, we establish in
Appendix three auxiliary results that lead to our
main theorem. Lemma~\ref{lem:growth} bounds the growth of the state
over one holding interval by a constant independent of $d$;
Lemma~\ref{lem:decomp} decomposes $\dot V_\varepsilon$ along
\eqref{eq:err_dyn} into a strictly dissipative part, a term absorbed
by the $O(d)$ dissipation in \eqref{eq:proj_dyn}, and a term that has
zero conditional mean at refresh times; and
Lemma~\ref{lem:moments} shows that the projected quantities
$a$, $b$, $h^\parallel$, and the deviation $\zeta(t)-\zeta(t_k)$,
have conditional mean square $O(1/d)$ over the holding interval.
Together these control the dependence of the held direction on the
evolving state and yield the following result.

\begin{thm}\label{thm:convergence}
Let Assumption~\ref{assump:cost} hold, let $\mathcal G$ be connected
and undirected, let $\sum_iv^i(0)=0$, and let the directions $u_k$ be
drawn independently across refreshes so that \eqref{eq:isotropy}
holds. Let $\bar\gamma=\gamma(1/\bar C)$ and
$\bar\theta=1+\alpha^2\ell_f^2(1+C_h^2\bar\gamma^2/4)$, and define
the $d$-independent constants
\begin{align*}
 \rho_0&=\min\Big\{\frac1{4\alpha},
 \frac1{4\sqrt2\,\alpha\beta\lambda_N},\\
 &\qquad\frac1\kappa\sqrt{\frac{c_0}{128\beta\lambda_N}},
 \frac1{32\beta\lambda_N\sqrt{2+\alpha^2}}\Big\},\\
 \sigma_0&=\min\Big\{1,\frac1{\alpha\beta\lambda_N(1+\ell_f)},
 \frac1{64C_h^2\bar\gamma^2}\Big\},\\
 \tau_0&=\min\Big\{1,\frac1{\bar C},\frac1{\sqrt{\bar\theta}},
 \frac{c_0}{16K_0}\Big\},
\end{align*}
where
$K_0=2\sqrt3\,\alpha\kappa\ell_f+8\sqrt2\,\beta\kappa\lambda_N
+4\varepsilon\beta\lambda_N+32\varepsilon\alpha\beta^2\lambda_N^2
+c_7\bar\gamma^2$, and let
\begin{equation}\label{eq:tau_bar}
 \bar\tau_d=\min\Big\{\tau_0,\ \frac{\rho_0}{d},\
 \sqrt{\frac{\sigma_0}{d}}\Big\}>0 .
\end{equation}
If $\tau\in(0,\bar\tau_d]$, then the trajectories of
\eqref{eq:mod_v}--\eqref{eq:mod_x} converge to $(x^\star,v^\star)$
exponentially fast in mean square and almost surely. In particular,
with $r=1-c_0\tau/(2c_+)$,
\begin{equation}\label{eq:mean_square_rate}
 \mathbb E\|\zeta(t_k)\|^2
 \leq\frac{c_+}{c_-}\,r^k\,\|\zeta(0)\|^2 .
\end{equation}
\end{thm}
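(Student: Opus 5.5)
The plan is a discrete-time Lyapunov argument at the refresh instants $t_k$. It compares $V_\varepsilon$ at $t_{k+1}$ with its value at $t_k$ through a conditional drift inequality
\[
\mathbb E\big[V_\varepsilon(\zeta(t_{k+1}))\mid\mathcal F_k\big]\le\Big(1-\frac{c_0\tau}{2c_+}\Big)V_\varepsilon(\zeta(t_k)),
\]
and then iterates and uses \eqref{eq:V_bounds}. Write $V_\varepsilon(\zeta(t_{k+1}))-V_\varepsilon(\zeta(t_k))=\int_{t_k}^{t_{k+1}}\dot V_\varepsilon\,ds$ and apply Lemma~\ref{lem:decomp} to split $\dot V_\varepsilon$ along \eqref{eq:err_dyn} into three parts:
\begin{itemize}
\item a strictly dissipative part bounded by $-c_0\|\zeta(s)\|^2$, which is the same mechanism as in the unencoded analysis of \cite{SSK-JC-SM:15optim}, made possible by the choice \eqref{eq:epsilon_choice};
\item a part involving the projected quantities $a,b,h^\parallel$ that is dominated by the $O(d)$ dissipation $-\beta d\,a^\top\hat La$ from \eqref{eq:proj_dyn};
\item a cross term of the form $\zeta^\top M(P-I)\zeta$, which vanishes in conditional mean only when $\zeta$ is frozen at $\zeta(t_k)$, by \eqref{eq:isotropy}.
\end{itemize}

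For the dissipative part, $\|\zeta(s)\|^2\ge\tfrac12\|\zeta(t_k)\|^2-\|\zeta(s)-\zeta(t_k)\|^2$. Lemma~\ref{lem:growth} gives $\|\zeta(s)\|\le\bar\gamma\|\zeta(t_k)\|$ for $\tau\le\min\{1,1/\bar C\}$, and the Lipschitz bound $\bar\theta$ controls the deviation, so the deviation is $O(\tau^2)\|\zeta(t_k)\|^2$. For $\tau\le 1/\sqrt{\bar\theta}$ this leaves at least $-\tfrac{c_0}{2}\tau\|\zeta(t_k)\|^2$ after integration.

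The cross term is the main obstacle, because the held direction $u_k$ is correlated with $\zeta(s)$ for $s>t_k$. I would freeze the term at $t_k$, which gives zero conditional mean, and bound the remainder by
\[
\|\zeta(s)-\zeta(t_k)\|\cdot\big(\|\zeta(t_k)\|+\|\zeta(s)\|\big)
\]
multiplied by the operator size. The catch is that $\|P\|=d$, so a crude bound loses a factor $d$. I would therefore route every $P$-dependence through the projected coordinates: $(\hat L\otimes P)z_2=d(\hat La)\otimes\hat u$. Lemma~\ref{lem:moments} then supplies conditional mean squares of $a$, $b$, $h^\parallel$ and of $\zeta(s)-\zeta(t_k)$ of size $O(1/d)$ over the interval, and this cancels the factor $d$ through Cauchy--Schwarz.

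This is where the constraints $\tau\le\rho_0/d$ and $\tau\le\sqrt{\sigma_0/d}$ should be used. The first keeps $\beta d\lambda_N\tau$ small, so the projected ODE \eqref{eq:proj_dyn} does not amplify. The second makes the second-order terms $d\tau^2$ small. The remainders collect into $K_0\tau^2\|\zeta(t_k)\|^2$, and $\tau\le c_0/(16K_0)$ makes them at most $\tfrac{c_0}{16}\tau\|\zeta(t_k)\|^2$. The net drift is then at most $-\tfrac{c_0}{4}\tau\|\zeta(t_k)\|^2\le-\tfrac{c_0\tau}{2c_+}V_\varepsilon(\zeta(t_k))$.

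Taking total expectation and iterating gives $\mathbb E V_\varepsilon(\zeta(t_k))\le r^kV_\varepsilon(\zeta(0))$, and \eqref{eq:V_bounds} converts this into \eqref{eq:mean_square_rate}. The intersample behavior is covered by Lemma~\ref{lem:growth}, which gives $\sup_{[t_k,t_{k+1}]}\|\zeta\|\le\bar\gamma\|\zeta(t_k)\|$, so mean-square convergence holds in continuous time. For almost-sure convergence, pick $r<\bar r<1$. Markov's inequality gives $\sum_k\mathbb P(\|\zeta(t_k)\|^2>\bar r^k)\le\sum_k(r/\bar r)^k<\infty$, so by Borel--Cantelli $\|\zeta(t_k)\|^2\le\bar r^k$ eventually. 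Combined with the intersample bound, this yields almost-sure exponential convergence to $(x^\star,v^\star)$.
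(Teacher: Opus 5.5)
Your outline follows the paper's skeleton: conditional drift at refresh times, freezing $S_u$ at $t_k$, Lemma~\ref{lem:moments} with Cauchy--Schwarz, and Borel--Cantelli. However, it omits the step that makes the stated constants work, and without that step your bookkeeping does not close. The remainder is $\int_0^\tau[S_u(\zeta(s))-S_u(\zeta_k)]\,ds=\int_0^\tau(\tau-s)\dot S_u\,ds$. It contains the two $O(d^2)$ terms of \eqref{eq:Sdot}, namely $\alpha\beta d^2a^\top\hat La$ and $\beta\kappa\sqrt{\lambda_N}\,d^2\|b\|\sqrt{a^\top\hat La}$. In your formulation they enter through, e.g., $d\,(b(s)-b_k)^\top a(s)$ with $b(s)-b_k=\alpha\beta d\int_0^s\hat L a(r)\,dr$. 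Cauchy--Schwarz with Lemma~\ref{lem:moments} cancels only one of the two factors of $d$. It yields bounds $\alpha\beta\lambda_N\kappa_a\,\rho\,\tau\|\zeta_k\|^2$ and $\beta\kappa\lambda_N\sqrt{\kappa_a\kappa_b}\,\rho\,\tau\|\zeta_k\|^2$ with $\rho=d\tau$. These are $O(d\tau^2)$, not $O(\tau^2)$, so they cannot be collected into $K_0\tau^2$ with a $d$-independent $K_0$. The stated $\rho_0$ limits $\rho$ only by $O(1/(\alpha\beta\lambda_N))$ and $O(\sqrt{c_0})$, never by $O(c_0)$, so nothing forces these terms below $\tfrac{c_0}{16}\tau\|\zeta_k\|^2$.

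The missing idea is to spend the second term of Lemma~\ref{lem:decomp}, $-\beta d(1-\varepsilon\alpha)\int_0^\tau a^\top\hat La\,ds$, to absorb these terms pathwise. That term is the $O(d)$ dissipation itself, not something dominated by it, and your plan never uses it. Since $\alpha\rho\le\tfrac14$ and $1-\varepsilon\alpha\ge\tfrac12$, you get $\tau\alpha\beta d^2a^\top\hat La\le\tfrac12\beta d(1-\varepsilon\alpha)a^\top\hat La$. Young's inequality gives $\tau\beta\kappa\sqrt{\lambda_N}\,d^2\|b\|\sqrt{a^\top\hat La}\le\tfrac14\beta d(1-\varepsilon\alpha)a^\top\hat La+2\beta\kappa^2\lambda_N d^3\tau^2\|b\|^2$. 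The leftover has conditional mean at most $2\beta\kappa^2\lambda_N\kappa_b\rho^2\tau\|\zeta_k\|^2$, which is quadratic in $\rho$. This separate remainder, absent from your accounting, is what the entry $\frac1\kappa\sqrt{c_0/(128\beta\lambda_N)}$ of $\rho_0$ controls. Only the remaining $O(d)$ terms of \eqref{eq:Sdot}, together with $c_7\|\zeta\|^2$, go into $K_0\tau^2$. Without this absorption, your route proves a version of the theorem only after shrinking $\rho_0$ to be proportional to $c_0$, not the stated one.

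A second, smaller error is in the dissipative part. The deviation $\zeta(s)-\zeta_k$ is not $O(\tau^2)\|\zeta_k\|^2$ by a Lipschitz argument. The vector field contains $\beta(\hat L\otimes P)z_2$ with $\|\hat L\otimes P\|=\lambda_N d$, so pathwise the deviation is only $O(\tau+\rho)\|\zeta_k\|$, and $\bar\theta$ is not a Lipschitz constant of the flow. What you need is \eqref{eq:zeta_dev} with $\kappa_\delta\le\tfrac18$. This uses the entry $\frac1{32\beta\lambda_N\sqrt{2+\alpha^2}}$ of $\rho_0$ and the entry $\frac1{64C_h^2\bar\gamma^2}$ of $\sigma_0$; the condition $\tau\le1/\sqrt{\bar\theta}$ only serves to give $\kappa_a\le32$. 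The result is $-\tfrac38c_0\tau\|\zeta_k\|^2$, not $-\tfrac12c_0\tau\|\zeta_k\|^2$, which cannot survive subtracting any deviation. The claimed drift comes from $-\tfrac38+\tfrac1{16}+\tfrac1{16}=-\tfrac14$. Your almost-sure argument is fine, since Lemma~\ref{lem:growth} holds pathwise.
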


\begin{proof}
The condition $\tau\le\bar\tau_d$ is equivalent to
\begin{equation}\label{eq:refresh_condition}
 \tau\le\tau_0,\qquad d\,\tau\le\rho_0,\qquad d\,\tau^2\le\sigma_0,
\end{equation}
and $\bar\tau_d$ depends only on the problem data, so refresh times are
separated by the fixed interval $\tau>0$ and every interval $[0,T]$
contains finitely many of them; on each holding interval the
right-hand side of \eqref{eq:err_dyn} is globally Lipschitz, so the
solution exists, is unique, and by Lemma~\ref{lem:growth} stays
bounded. Integrate \eqref{eq:Vdot_decomp} over the holding interval
and write
$\int_0^\tau S_u\,ds=\tau S_u(\zeta_k)+\int_0^\tau(\tau-s)\dot S_u\,ds$.
This gives
\begin{align*}
 &V_\varepsilon(t_{k+1})-V_\varepsilon(t_k)\le
 -c_0\!\int_0^\tau\!\|\zeta\|^2ds\\
 &\quad-\beta d(1-\varepsilon\alpha)\!\int_0^\tau\!a^\top\hat La\,ds
 +\tau S_u(\zeta_k)+\tau\!\int_0^\tau\![\dot S_u]_+\,ds .
\end{align*}
Bound $[\dot S_u]_+$ using \eqref{eq:Sdot}. Since
$1-\varepsilon\alpha\ge\tfrac12$ and $\alpha\rho\le\tfrac14$, the first
term of \eqref{eq:Sdot}, multiplied by $\tau$, is at most
$\tfrac12\beta d(1-\varepsilon\alpha)a^\top\hat La$. By Young's
inequality, $\tau$ times the second term is at most
$\tfrac14\beta d(1-\varepsilon\alpha)a^\top\hat La
+2\beta\kappa^2\lambda_Nd^3\tau^2\|b\|^2$. The dissipative integral
absorbs both $a^\top\hat La$ parts along every trajectory.

Next, take conditional expectations. The term $\tau S_u(\zeta_k)$
vanishes by Lemma~\ref{lem:decomp}. Condition
\eqref{eq:refresh_condition} ensures $\kappa_a\le32$, $\kappa_b\le4$,
$\kappa_h\le3\ell_f^2$, and $\kappa_\delta\le\tfrac18$. Lemma~\ref{lem:moments}
and the Cauchy--Schwarz inequality then give:
\begin{itemize}
\item the $\|b\|^2$ remainder is at most
$2\beta\kappa^2\lambda_N\kappa_b\rho^2\tau\|\zeta_k\|^2
\le\tfrac{c_0}{16}\tau\|\zeta_k\|^2$;
\item the remaining terms of \eqref{eq:Sdot} contribute at most
$\tau^2K_0\|\zeta_k\|^2\le\tfrac{c_0}{16}\tau\|\zeta_k\|^2$;
\item since $\|\zeta(s)\|^2\ge\frac12\|\zeta_k\|^2-\|\zeta(s)-\zeta_k\|^2$,
the first integral contributes at most
$-\tfrac38c_0\tau\|\zeta_k\|^2$ by \eqref{eq:zeta_dev}.
\end{itemize}
Hence
\[
 \mathbb E[V_\varepsilon(t_{k+1})\mid\mathcal F_k]-V_\varepsilon(t_k)
 \le-\tfrac{c_0\tau}{4}\|\zeta_k\|^2
 \le-\tfrac{c_0\tau}{2c_+}V_\varepsilon(t_k),
\]
and iterating with \eqref{eq:V_bounds} gives \eqref{eq:mean_square_rate}.
Between refreshes, Lemma~\ref{lem:growth} gives
$\mathbb E[\sup_{t_k\le t\le t_{k+1}}\|\zeta(t)\|^2]
\le\bar\gamma^2\mathbb E\|\zeta(t_k)\|^2$. For any
$\bar r\in(r,1)$, Markov's inequality shows that
$\sum_k\mathbb P(\sup_{[t_k,t_{k+1}]}\|\zeta\|^2>\bar r^k)<\infty$, and
the Borel--Cantelli lemma yields almost sure exponential convergence.
Since \eqref{eq:cov} is an isometry, the same statements hold for
$(\tilde x,\tilde v)$.
\end{proof}

\begin{rem}[Dimension dependence]\label{rem:dimension}
For $d\ge\rho_0^2/\sigma_0$, \eqref{eq:tau_bar} reduces to
$\bar\tau_d=\min\{\tau_0,\rho_0/d\}$, so the admissible holding
interval scales as $1/d$, while the decay rate per unit time,
$c_0/(2c_+)$, does not depend on $d$ and matches, up to constants,
the rate the same Lyapunov argument gives for full-vector exchange.
A bound based only on the worst-case norm
$\|\hat L\otimes u_ku_k^\top\|=\lambda_Nd$ would give
$\tau=O(1/d^2)$; the improvement comes from the two facts noted
after \eqref{eq:proj_dyn}. In the continuous-time model, refreshing
costs no communication, since $u_k$ is regenerated locally, and each
link carries one real-valued signal instead of $d$ for every
admissible $\tau$. In a sampled implementation that transmits one
scalar per refresh, however, the condition $\tau\le\bar\tau_d=O(1/d)$
implies $O(d)$ scalar transmissions per unit time per link, the same
order as one full-vector message per unit time. The guarantee
therefore establishes a reduction in message size, which is what a
per-message bandwidth limit constrains, and not a reduction in total
scalar throughput. The example in Section~\ref{sec:numerical}
converges at refresh intervals well above $\bar\tau_d$, which
suggests the bound is conservative; whether the throughput also
decreases is an open question that requires the sampled-data
analysis left to future work.
\end{rem}

%\begin{rem}[Conservativeness]\label{rem:conservative}
%The constants in \eqref{eq:tau_bar} are sufficient, not necessary, but
%the $1/d$ scaling and the need for some bound are both visible in the
%example of Section~IV: for $d\in\{16,64,256\}$ the decay rate is
%retained for $d\tau\lesssim0.4$, degrades as $d\tau$ grows, and the
%trajectories diverge once $d\tau$ exceeds a threshold between $3$ and
%$6$ that is essentially independent of $d$. The theorem does not
%assert stability for arbitrary refresh intervals, and the example
%shows that none should be expected. The threshold is governed by the
%term $1/(\alpha\beta\lambda_N)$ in $\rho_0$, which bounds the integral
%action $\alpha\beta d\hat L$ accumulated over one holding interval.
%\hline
%The constants in (35) are sufficient, not necessary. In the example of Section IV, with $d=64$, the error at a fixed time grows with $d\tau$ well beyond $\bar\tau_d$, as Fig. 2(b) shows, and the threshold is governed by the term $1/(\alpha\beta\lambda_N)$ in $\rho_0$, which bounds the integral action accumulated over one holding interval.

\begin{rem}[Conservativeness]
The constants in \eqref{eq:tau_bar} are sufficient, not necessary.
In the example of Section~\ref{sec:numerical}, convergence is
observed for holding intervals well beyond $\bar\tau_d$, and the
error at a fixed time grows with $d\tau$ (Fig.~\ref{fig:refresh}b),
so some bound is needed even though \eqref{eq:tau_bar} is not tight.
The threshold is governed by the term $1/(\alpha\beta\lambda_N)$ in
$\rho_0$, which bounds the integral action $\alpha\beta d\hat L$
accumulated over one holding interval.
\end{rem}

\begin{rem}[Gaussian directions]\label{rem:gauss}
Lemmas~\ref{lem:decomp} and~\ref{lem:moments} use $\|u_k\|^2=d$
through $P=d\,\hat u\hat u^\top$; for $u_k=g_k$ the factor $d$ is
replaced by $\|g_k\|^2$, which is unbounded, which is why no refresh
bound of the form \eqref{eq:tau_bar} follows. Normalizing to the
sphere of radius $\sqrt d$ restores \eqref{eq:isotropy} and gives
encoding variance $(d-1)\|a\|^2$, the same as (R) and (C).
\end{rem}
\subsection{Encoding Variance and Exactness of Convergence}
\label{subsec:variance}
\vspace{-0.05in}
Theorem~\ref{thm:convergence} guarantees convergence to $z^\star$
under the refresh condition. Random encoding nevertheless introduces
transient fluctuations, which we now quantify at the refresh~times.

Let $P(t)=I_N\otimes(u_ku_k^\top)$ on $[t_k,t_{k+1})$ and decompose
\begin{equation}
 P(t)=I_{Nd}+\boldsymbol\Delta_k,\qquad
 \mathbb E[\boldsymbol\Delta_k\mid\mathcal F_k]=0 .
\end{equation}
With the coupling perturbation
$\xi(t)=(L\otimes I_d)\boldsymbol\Delta_k\tilde x(t)$, the error
dynamics read
\begin{align}\label{eq:error_perturbation}
 \dot{\tilde x}
    &=-\alpha h-\beta(L\otimes I_d)\tilde x-\tilde v-\beta\xi(t),\\
 \dot{\tilde v}
    &=\alpha\beta(L\otimes I_d)\tilde x+\alpha\beta\xi(t).\nonumber
\end{align}
This is a random differential equation with piecewise-constant
directions, not an It\^{o} stochastic differential equation; a
diffusion approximation would require a separate limiting argument.
The perturbation enters both the state and the integral equations,
is multiplicative in the disagreement, and vanishes at consensus.

\subsubsection*{Encoding variance}
For a fixed $a\in\mathbb R^d$, unnormalized Gaussian fourth moments
give $\mathbb E[(uu^\top)^2]=(d+2)I_d$, hence
\begin{equation}\label{eq:gauss_variance}
 \mathbb E\|(uu^\top-I_d)a\|^2=(d+1)\|a\|^2 .
\end{equation}
For any fixed-norm direction, including (R), (C), and
sphere-normalized Gaussian, $\|u\|^2=d$ and
$\mathbb E[(uu^\top)^2]=dI_d$, so
\begin{equation}\label{eq:bounded_variance}
 \mathbb E\|(uu^\top-I_d)a\|^2=(d-1)\|a\|^2 .
\end{equation}
Consequently, at each refresh time,
\begin{align}
 \mathbb E[\xi(t_k^+)\mid\mathcal F_k]&=0,\\
 \mathbb E[\|\xi(t_k^+)\|^2\mid\mathcal F_k]
   &=\omega\|(L\otimes I_d)\tilde x_k\|^2
   \leq\omega\lambda_N^2\|\tilde x_k\|^2,
\end{align}
with $\omega=d+1$ for unnormalized Gaussian and $\omega=d-1$ for
fixed-norm directions. Fixed-norm directions thus reduce the
encoding variance by $2\|(L\otimes I_d)\tilde x_k\|^2$ relative to
the unnormalized Gaussian choice, the direct analogue of the variance
reduction established for federated aggregation
in~\cite{rostami2024fedscalar}. The fixed-norm choices differ in how
this variance is distributed: coordinate directions concentrate it on
the sampled coordinate, whereas Rademacher directions spread it over
all coordinates and yield a transmitted scalar that is sub-Gaussian
with parameter $\|x^i\|$ rather than of magnitude up to
$\sqrt d\,\|x^i\|$. For all of these unbiased encoders the variance
grows linearly with $d$; it is the refresh condition, not the choice
of encoder, that controls the error. These identities concern a
freshly drawn direction; no independence is assumed within a holding
interval.

\subsubsection*{Exact convergence}
Let $X(t)=\mathbb E[\tilde x(t)\tilde x(t)^\top]$. By
Theorem~\ref{thm:convergence},
$\lim_{t\to\infty}\operatorname{tr}X(t)
=\lim_{t\to\infty}\mathbb E\|\tilde x(t)\|^2=0$. Hence, with constant
positive gains and sufficiently frequent refreshes, scalar
communication yields exact convergence: random encoding does not
leave a residual error ball. This is consistent with
\eqref{eq:error_perturbation}, since $\xi$ vanishes when all agents
agree. In particular, no diminishing coupling gain is required. The
result concerns the held-direction continuous-time model; a
sampled-message or discrete-time implementation requires its own
step-size and sampling analysis.
\section{Numerical Example}
\label{sec:numerical}
\vspace{-0.05in}
We consider distributed $\ell_2$-regularized logistic regression
over $N=20$ agents with decision dimension $d=64$. Agent~$i$
minimizes the local cost
\begin{equation}
 f^i(x)=\frac{1}{m}\sum\nolimits_{s=1}^{m}
 \log\!\left(1+e^{-y_{i,s}a_{i,s}^{\top}x}\right)
 +\frac{0.3}{2}\|x\|^2,
\end{equation}
where $a_{i,s}\in\mathbb R^{64}$ and $y_{i,s}\in\{-1,+1\}$.
Each agent communicates with the agents one and three positions
away in either direction around a cycle, with edge weights $1/8$.
We set $\alpha=\beta=1$ and initialize $x^i(0)=v^i(0)=0$.

\begin{figure}[t]
 \centering
 \includegraphics[width=\columnwidth]{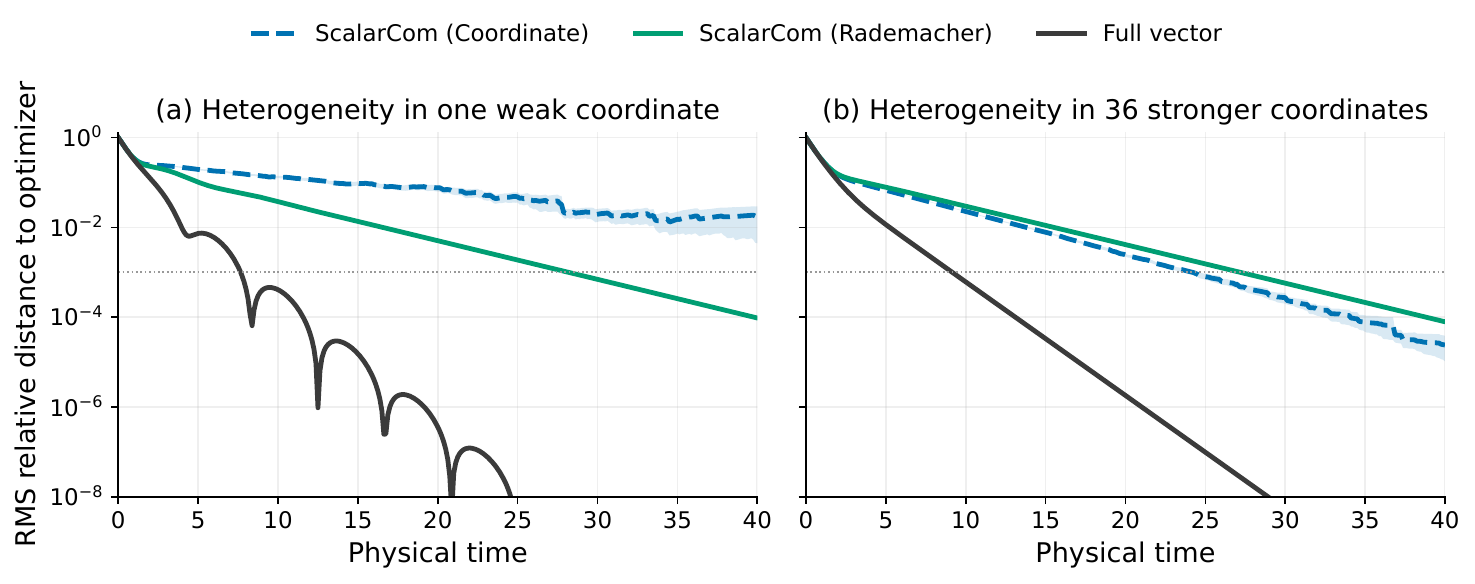}
 \caption{Relative RMS error at $d\tau=2$ for concentrated
 (left) and spread (right) gradient differences. Results use
 4096 runs; shaded bands show pointwise 95\% bootstrap
 confidence intervals for the RMS estimate.}
 \label{fig:geometry}\vspace{-0.1in}
\end{figure}

% \begin{figure}[t]
%  \centering
%  \includegraphics[width=\columnwidth]{Fig/fig_refresh.pdf}
%  \caption{Effect of refreshing in the concentrated case:
%  (a) error versus time; (b) RMS error at $t=40$ versus $d\tau$.
%  Error bars show pointwise 95\% bootstrap confidence intervals.}
%  \label{fig:refresh}
% \end{figure}
We construct synthetic data to control which coordinates cause
disagreement between agents. Each agent receives 128 samples with
feature vector $\gamma_j e_j$ for each coordinate $j$, giving
$m=8192$ samples in total. Here, $e_j$ is the $j$th coordinate
vector, $\gamma_j=16$ for $j\le49$, and $\gamma_j=2$ otherwise.
Thus, each sample involves a single coordinate, and the feature
magnitudes create different curvatures across coordinates.
The local costs are strongly convex and have globally Lipschitz
gradients, with $m_f=0.3$ and $\ell_f\le1.3$.

Agents are divided into two groups that alternate around the cycle.
The groups have identical feature vectors but different labels.
For coordinate~1, every agent has 96 positive and 32 negative
labels. We then consider two cases. In the \emph{concentrated}
case, the two groups have 16 and 112 positive labels, respectively,
among the 128 samples for coordinate~50; every other coordinate
except coordinate~1 has equally many positive and negative labels.
In the \emph{spread} case, the groups instead have 63 and 65
positive labels for each coordinate from 2 through 37; all
coordinates other than these and coordinate~1 are balanced.
The differences between agents therefore occur in one
weak-curvature coordinate in the first case and in 36
stronger-curvature coordinates in the second. Both cases have
the same aggregate objective and optimizer $z^\star$.
They also have the same magnitude of local gradient differences:
$\|\nabla f^i(x)-N^{-1}\sum_j\nabla f^j(x)\|=3/256$.
This construction lets us compare the effect of where those
differences occur while keeping their overall magnitude~fixed.

We compare scalar communication with shared Rademacher directions
$u_k\in\{-1,+1\}^{64}$, scalar communication with shared coordinate
directions $u_k=\sqrt{64}\,e_{J_k}$ for uniformly sampled $J_k$,
and full-vector exchange. Directions are drawn independently at
intervals of length $\tau$ and held constant between refreshes.
We simulate the continuous-time dynamics using fourth-order
Runge--Kutta integration, with steps ending at each refresh.
The integration step is at most
$\min\{\tau/4,\,0.4/(d+1.3)\}$ for all methods.
For $M$ independent runs, the relative RMS error is
\begin{equation}
 E_{\mathrm{RMS}}(t)=
 \frac{1}{\|z^\star\|}
 \left(\frac{1}{MN}\sum_{r=1}^{M}\sum_{i=1}^{N}
 \|x_r^i(t)-z^\star\|^2\right)^{1/2}.
\end{equation}
This measures the distance to the optimizer, averaged over agents
and runs, relative to its initial value. Since all agents start
at zero, $E_{\mathrm{RMS}}(0)=1$.

Figure~\ref{fig:geometry} compares the two data sets at
$\tau=2/64$, using $M=4096$ runs per stochastic method.
In the concentrated case, Rademacher encoding gives a substantially
smaller RMS error: at $t=20$, its error is $0.00504$, compared with
$0.0760$ for coordinate encoding. The ordering reverses in the
spread case, where coordinate encoding reaches $0.00245$,
compared with $0.00414$ for Rademacher encoding.
In this separable problem, coordinate feedback acts on individual
coordinates, whereas Rademacher feedback couples coordinates with
different curvatures. The comparison shows that this mixing can
help or hurt depending on where the differences between agents
occur. These are RMS comparisons: coordinate encoding can have
a smaller median error but a larger RMS error because some runs
converge much more slowly. Thus, neither direction distribution
is uniformly better, consistent with the theorem providing
convergence guarantees for both.

\begin{figure}[t]
 \centering
 \includegraphics[width=\columnwidth]{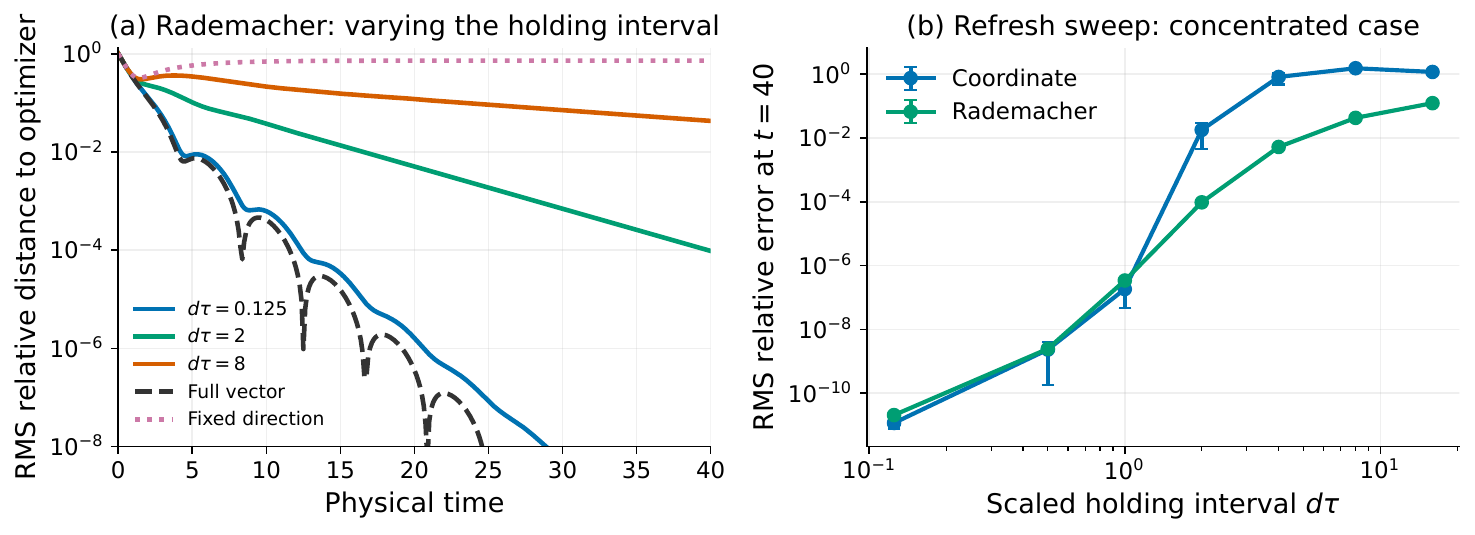}
 \caption{Effect of refreshing in the concentrated case:
 (a) error versus time; (b) RMS error at $t=40$ versus $d\tau$.
 Error bars show pointwise 95\% bootstrap confidence intervals.}
 \label{fig:refresh}
\end{figure}

Figure~\ref{fig:refresh} shows the effect of the holding interval
in the concentrated case. Panel~(a) shows that frequent refreshing
leads to faster error decay, whereas a fixed direction leaves a
persistent error. Panel~(b) reports the RMS error at $t=40$ for
different values of $d\tau$. Smaller holding intervals give lower
errors for both encoders, while Rademacher encoding performs
better at larger holding intervals in this example. 

These results illustrate how scalar communication can support
convergence to the optimizer while reducing the size of the
exchanged information. Each agent transmits only the scalar
$u_k^\top x^i$, and its neighbors reconstruct the shared direction
locally. The communicated state therefore has one component
instead of $d$, reducing the per-message payload from
$\mathcal{O}(d)$ to $\mathcal{O}(1)$ in a sampled implementation.
This makes the mechanism
attractive for channels that can accommodate a scalar message
but cannot carry a full state vector within the available
transmission budget. The continuous-time results demonstrate
convergence with scalar-valued exchange, providing the basis
for such bandwidth-constrained implementations.

\section{Conclusions}
\label{sec:conclusion}
\vspace{-0.05in}
We introduced a scalar-communication mechanism for consensus-based
distributed optimization in which each neighbor message carries a
single real number regardless of the decision dimension. Agents
regenerate a common random direction from a shared seed, transmit
the projection of their state onto that direction, and act on a
rank-one surrogate of their neighbors' states. We showed that the
optimal consensus equilibrium is preserved, that fixed directions
can introduce spurious equilibria, and that sufficiently frequent
direction refreshes yield exponential mean-square and almost-sure
convergence with constant gains and no residual error for strongly
convex objectives with Lipschitz gradients. Simulations illustrated
the effects of the direction distribution and refresh interval.
Future work includes a discrete-time formulation with sampled and
quantized scalar messages, tighter refresh-interval bounds,
extensions to directed and time-varying networks, robustness to
unreliable communication channels through redundant scalar
transmissions, and the combination of scalar communication with
event-triggered scheduling to reduce both message size and
communication~frequency.
\bibliographystyle{ieeetr}%
\bibliography{bib/alias,bib/Reference,bib/ref_check} % 

%\appendices
%\section{Auxiliary Results}
%\label{app:aux}
\vspace{-0.1in}
\appendix[Auxiliary Results]
\setcounter{equation}{0}
\renewcommand{\theequation}{A.\arabic{equation}}
\setcounter{lem}{0}
\renewcommand{\thelem}{A.\arabic{lem}}
\vspace{-0.05in}
\begin{lem}[Dimension-free growth]\label{lem:growth}
For any direction and $0\le s\le\tau$,
$\|\zeta(t_k+s)\|\le\gamma(s)\|\zeta(t_k)\|$, where
$\gamma(s)=(1+\alpha)\sqrt{1+(1+\alpha)^2}\,e^{\bar Cs}$ and
$\bar C=\max\{1+\alpha,\ \alpha+\alpha^2(1+\ell_f)\}$.
\end{lem}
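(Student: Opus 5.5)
The plan is to remove the $O(d)$ coupling $\beta(\hat L\otimes P)z_2$ from every equation except one, where it enters only as a dissipative (sign-definite) term. The point is that $\|\hat L\otimes P\|=\lambda_N d$ must never enter a Gronwall constant. In \eqref{eq:err_dyn} the same operator $\hat L\otimes P$ appears in \eqref{eq:z2} with coefficient $-\beta$ and in \eqref{eq:w2} with coefficient $+\alpha\beta$. This suggests the auxiliary variable
\[
p=w_2+\alpha z_2 .
\]
Adding $\alpha$ times \eqref{eq:z2} to \eqref{eq:w2} cancels the coupling exactly. Substituting $w_2=p-\alpha z_2$ gives
\[
\dot p=-\alpha p+\alpha^2 z_2-\alpha^2 h_2 .
\]
This equation is independent of both $d$ and the direction $u_k$. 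In the same variables,
\[
\dot z_2=-\alpha h_2-\beta(\hat L\otimes P)z_2-p+\alpha z_2 ,
\]
together with $\dot z_1=-\alpha h_1$ from \eqref{eq:z1}.

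\emph{Step 1: norm of $z$.} I treat $z=(z_1,z_2)$ jointly and compute $\tfrac{d}{dt}\tfrac12\|z\|^2$. The gradient term is $-\alpha z^\top\bar h=-\alpha\tilde x^\top h$, which is nonpositive by monotonicity of $\nabla G$ (convexity of each $f^i$). The coupling term is $-\beta z_2^\top(\hat L\otimes P)z_2\le0$, because $\hat L\succ0$ and $P=u_ku_k^\top\succeq0$, so $\hat L\otimes P\succeq0$. This holds for any direction and is the only place the $O(d)$ term appears. The remaining terms are bounded by $\alpha\|z\|^2+\|z\|\|p\|$. Hence
\[
\tfrac{d}{dt}\|z\|\le\alpha\|z\|+\|p\| ,
\]
handling $\|z\|=0$ by a standard Dini-derivative argument.

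\emph{Step 2: norm of $p$.} Since $\|h_2\|\le\|h\|\le\ell_f\|\tilde x\|=\ell_f\|z\|$ by the Lipschitz assumption and the isometry \eqref{eq:cov},
\[
\tfrac{d}{dt}\|p\|\le-\alpha\|p\|+\alpha^2(1+\ell_f)\|z\| .
\]

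\emph{Step 3: Gronwall.} Set $n=\|z\|+\|p\|$. Adding the two inequalities gives
\[
\dot n\le\big(\alpha+\alpha^2(1+\ell_f)\big)\|z\|+(1-\alpha)\|p\|\le\bar C\,n ,
\]
using $1-\alpha\le1+\alpha$. Therefore $n(t_k+s)\le e^{\bar Cs}n(t_k)$.

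\emph{Step 4: convert back to $\zeta$.} At the initial time, $\|p\|\le\|w_2\|+\alpha\|z_2\|$. By Cauchy--Schwarz, $n(t_k)\le\|z\|+\|w_2\|+\alpha\|z\|$ is bounded by a constant multiple of $\|\zeta(t_k)\|$ involving $(1+\alpha)$ and $\sqrt{1+(1+\alpha)^2}$. At the final time, $\|\zeta\|^2=\|z\|^2+\|p-\alpha z_2\|^2\le\|z\|^2+(\|p\|+\alpha\|z\|)^2$. Writing this as the norm of $(\|z\|,\|p\|)$ under the map $(s,t)\mapsto(s,t+\alpha s)$ and bounding its operator norm gives factors of the same type. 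Combining both conversion factors, I expect a product of the form $(1+\alpha)\sqrt{1+(1+\alpha)^2}$. Whether the constant matches exactly depends on how Step 3 is phrased; I may instead run Gronwall on the Euclidean norm of $(\|z\|,\|p\|)$.

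I expect the main obstacle to be Steps 1 and 3 combined: confirming that the only use of $P$ is through $\hat L\otimes P\succeq0$, and that the mixed terms $-z_2^\top p$ and $\alpha\|z_2\|^2$ introduced by the substitution close into a $2\times2$ comparison system with rate exactly $\bar C$. The conversion factors in Step 4 are bookkeeping; if the stated constant does not come out directly, a looser Cauchy--Schwarz split still yields a $d$-independent $\gamma(s)$ of the stated form, which is all the rest of the analysis uses.
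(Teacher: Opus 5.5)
Your proposal is correct and is essentially the paper's proof. The same variable $p=w_2+\alpha z_2$ (the paper's $q$) cancels the direction-dependent coupling, $\tfrac{d}{dt}\|z\|\le\alpha\|z\|+\|p\|$ uses only $\hat L\otimes P\succeq0$ and monotonicity of $\nabla G$, and Gronwall is applied to $\|z\|+\|p\|$ with rate $\bar C$; keeping the $-\alpha\|p\|$ damping is a harmless sharpening that you then relax to the same $\bar C$. Your hesitation in Step 4 is unnecessary, since $n(t_k)\le(1+\alpha)\|z_k\|+\|w_{2,k}\|\le\sqrt{1+(1+\alpha)^2}\,\|\zeta(t_k)\|$ by Cauchy--Schwarz and $\|\zeta\|\le\|z\|+\|w_2\|\le(1+\alpha)\|z\|+\|p\|\le(1+\alpha)n$ give exactly the stated $\gamma(s)$.
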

\begin{proof}
Let $q=w_2+\alpha z_2$. By \eqref{eq:z2}--\eqref{eq:w2},
$\dot q=-\alpha(\alpha h_2+w_2)$, so the direction-dependent terms
cancel. Also, by \eqref{eq:err_dyn} and $z^\top\bar h\ge0$,
$\tfrac{d}{dt}\tfrac12\|z\|^2=-\alpha z^\top\bar h
-\beta d\,a^\top\hat La-z_2^\top w_2\le\|z\|\|w_2\|$.
Using $\|w_2\|\le\|q\|+\alpha\|z\|$ and
$\|\bar h\|\le\ell_f\|z\|$, $m=\|z\|+\|q\|$ satisfies
$\dot m\le\bar Cm$. Since $m(0)\le\sqrt{1+(1+\alpha)^2}\,\|\zeta(t_k)\|$
and $\|\zeta\|\le(1+\alpha)m$, the claim follows.
\end{proof}

\begin{lem}[Decomposition of $\dot V_\varepsilon$]\label{lem:decomp}
For a fixed direction, along \eqref{eq:err_dyn},
\begin{equation}\label{eq:Vdot_decomp}
 \dot V_\varepsilon=\bar g(\zeta)-\beta d(1-\varepsilon\alpha)\,
 a^\top\hat La+S_u(\zeta),
\end{equation}
where
\begin{align*}
 \bar g&=-\alpha z^\top\bar h-\varepsilon w_2^\top
 \big(\alpha h_2+\beta(\hat L\otimes I_d)z_2\big)
 -\varepsilon\|w_2\|^2\\
 &\le-c_0\|\zeta\|^2,\\
 S_u&=d\,b^\top a-w_2^\top z_2-\varepsilon\beta\big(d\,b^\top\hat La
 -w_2^\top(\hat L\otimes I_d)z_2\big),
\end{align*}
and $\mathbb E[S_{u_k}(\zeta(t_k))\mid\mathcal F_k]=0$.
\end{lem}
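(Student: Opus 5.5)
The plan is a direct computation: differentiate each of the three pieces of $V_\varepsilon$ along \eqref{eq:err_dyn}, then split the result into a direction-averaged part and a zero-mean fluctuation. The only direction-dependent object is $P=u_ku_k^\top=d\,\hat u\hat u^\top$, and we rewrite every occurrence of it through the projected variables $a=z_2^\parallel$ and $b=w_2^\parallel$. The identities we need are
\[
z_2^\top(\hat L\otimes P)z_2=d\,a^\top\hat La,\qquad
w_2^\top(I_{N-1}\otimes P)z_2=d\,b^\top a,\qquad
w_2^\top(\hat L\otimes P)z_2=d\,b^\top\hat La .
\]
They follow from $(\hat L\otimes P)y=d\,(\hat Ly^\parallel)\otimes\hat u$, stated before \eqref{eq:proj_dyn}.

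\emph{Step 1: differentiate.} Using \eqref{eq:z1}--\eqref{eq:z2},
\[
\tfrac{d}{dt}\tfrac12\|z\|^2=-\alpha z^\top\bar h-\beta d\,a^\top\hat La-z_2^\top w_2 .
\]
Using \eqref{eq:w2}, the $\hat L^{-1}$ factor cancels $\hat L$, so the integral-state term gives
\[
\tfrac{d}{dt}\Big(\tfrac1{2\alpha\beta}w_2^\top(\hat L^{-1}\otimes I_d)w_2\Big)=w_2^\top(I\otimes P)z_2=d\,b^\top a .
\]
The cross term gives
\[
\varepsilon\tfrac{d}{dt}(z_2^\top w_2)=-\varepsilon\alpha h_2^\top w_2-\varepsilon\beta d\,b^\top\hat La-\varepsilon\|w_2\|^2+\varepsilon\alpha\beta d\,a^\top\hat La .
\]

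\emph{Step 2: regroup.} Summing the three pieces, the two $a^\top\hat La$ terms combine into $-\beta d(1-\varepsilon\alpha)a^\top\hat La$. We then add and subtract $\varepsilon\beta\,w_2^\top(\hat L\otimes I_d)z_2$. The terms with $P$ replaced by $I_d$ form $\bar g$. What remains is $d\,b^\top a-w_2^\top z_2-\varepsilon\beta\big(d\,b^\top\hat La-w_2^\top(\hat L\otimes I_d)z_2\big)=S_u$, which yields \eqref{eq:Vdot_decomp}.

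\emph{Step 3: zero mean of $S_u$.} At $t_k$ the state $\zeta(t_k)$ is $\mathcal F_k$-measurable. By \eqref{eq:isotropy}, $\mathbb E[d\,\hat u\hat u^\top\mid\mathcal F_k]=\mathbb E[u_ku_k^\top\mid\mathcal F_k]=I_d$. Hence
\[
\mathbb E[d\,b^\top a\mid\mathcal F_k]=w_2^\top(I\otimes I_d)z_2=w_2^\top z_2,
\qquad
\mathbb E[d\,b^\top\hat La\mid\mathcal F_k]=w_2^\top(\hat L\otimes I_d)z_2 ,
\]
so each bracket in $S_u$ has zero conditional mean. This is also where the argument would break for unnormalized directions: there $d$ would be replaced by the random quantity $\|u_k\|^2$.

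\emph{Step 4: bound on $\bar g$.} This is the only step requiring inequalities, and it is where the choice \eqref{eq:epsilon_choice} enters.
\begin{itemize}
\item Strong convexity of each $f^i$ gives $z^\top\bar h=\tilde x^\top h\ge m_f\|z\|^2$.
\item Lipschitz continuity gives $\|h_2\|\le\|\bar h\|\le\ell_f\|z\|$, and we have $\|(\hat L\otimes I_d)z_2\|\le\lambda_N\|z\|$.
\item Together these give $|\varepsilon w_2^\top(\alpha h_2+\beta(\hat L\otimes I_d)z_2)|\le\varepsilon B\|w_2\|\|z\|$.
\item Young's inequality bounds this by $\tfrac\varepsilon2\|w_2\|^2+\tfrac{\varepsilon B^2}2\|z\|^2$.
\end{itemize}
Since $\varepsilon B^2\le\alpha m_f$, we obtain $\bar g\le-\tfrac{\alpha m_f}{2}\|z\|^2-\tfrac\varepsilon2\|w_2\|^2$. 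Because $\|\zeta\|^2=\|z\|^2+\|w_2\|^2$, this gives $\bar g\le-c_0\|\zeta\|^2$ with $c_0=\tfrac12\min\{\alpha m_f,\varepsilon\}$.

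I expect no real obstacle here. The delicate points are bookkeeping ones: the $\hat L^{-1}$ cancellation in Step 1, and recognizing that the averaged (deterministic) part must be split off before taking expectations, since the zero-mean property of $S_u$ holds only at the refresh instant.
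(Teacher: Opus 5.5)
Your proof is correct and follows the paper's route: direct expansion via the projected identities for $(\hat L\otimes P)$, adding and subtracting $\varepsilon\beta\,w_2^\top(\hat L\otimes I_d)z_2$ to isolate $S_u$, the zero conditional mean from isotropy at $t_k$, and Young's inequality with $\varepsilon B^2\le\alpha m_f$ to obtain $\bar g\le-c_0\|\zeta\|^2$. Your aside about unnormalized directions is inaccurate: since $d\,\hat u\hat u^\top=u_ku_k^\top$ identically, the zero-mean step needs only $\mathbb E[u_ku_k^\top\mid\mathcal F_k]=I_d$, and it still holds for unnormalized Gaussian directions (if $\hat u=u_k/\|u_k\|$, the factor $\|u_k\|^2$ that replaces $d$ still gives $\mathbb E[\|u_k\|^2\hat u\hat u^\top\mid\mathcal F_k]=I_d$), so the fixed norm is needed only by the later bounds, e.g.\ $\|y^\parallel\|\le\|y\|$ in Lemma~\ref{lem:moments}.
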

\begin{proof}
With $\nabla_{z_1}V_\varepsilon=z_1$,
$\nabla_{z_2}V_\varepsilon=z_2+\varepsilon w_2$, and
$\nabla_{w_2}V_\varepsilon=\tfrac1{\alpha\beta}(\hat L^{-1}\otimes I_d)w_2
+\varepsilon z_2$, a direct expansion along \eqref{eq:err_dyn} using
$(\hat L\otimes P)z_2=d(\hat La)\otimes\hat u$ and
$w_2^\top(\hat L^{-1}\hat L\otimes\hat u\hat u^\top)z_2=b^\top a$ gives
\eqref{eq:Vdot_decomp}. The bound on $\bar g$ follows from
$z^\top\bar h=\tilde x^\top h\ge m_f\|z\|^2$,
$\|\alpha h_2+\beta(\hat L\otimes I_d)z_2\|\le B\|z\|$, Young's
inequality, and \eqref{eq:epsilon_choice}. The zero mean follows from
\eqref{eq:proj_moment}, since $\mathbb E[d\,b^\top a]=w_2^\top z_2$
and $\mathbb E[d\,b^\top\hat La]=w_2^\top(\hat L\otimes I_d)z_2$.
\end{proof}

The bound $\bar g\le-c_0\|\zeta\|^2$ holds along every trajectory, not
only in expectation, and the second term of \eqref{eq:Vdot_decomp} is
dissipative. Only $S_u$ has an indefinite sign. Because it is quadratic,
differentiating along \eqref{eq:err_dyn}--\eqref{eq:proj_dyn} and
dropping nonpositive terms~gives
\begin{align}\label{eq:Sdot}
 \dot S_u\le{}&\alpha\beta d^2a^\top\hat La
 +\beta\kappa\sqrt{\lambda_N}\,d^2\|b\|\sqrt{a^\top\hat La}\nonumber\\
 &+d\big(\alpha\kappa\|b\|\|h^\parallel\|
 +\beta\kappa\lambda_N\|b\|\|a\|
 +\varepsilon\beta\lambda_N\|b\|^2\nonumber\\
 &\quad+\varepsilon\alpha\beta^2\lambda_N^2\|a\|^2\big)
 +c_7\|\zeta\|^2,
\end{align}
with $c_7=1+\alpha\kappa\ell_f/2$. The two $O(d^2)$ terms are paired
with $a^\top\hat La$ and can be absorbed by the dissipative term in
\eqref{eq:Vdot_decomp}. Every other factor $d$ multiplies a product of
two projected quantities, and the next lemma shows that each has mean
square $O(1/d)$.

\begin{lem}[Projected moments]\label{lem:moments}
Let $\rho=d\tau$ and suppose
$d\tau^2\le\min\{1,1/(\alpha\beta\lambda_N(1+\ell_f))\}$. Then
\begin{equation}\label{eq:proj_sup}
 \mathbb E\Big[\sup_{t_k\le t\le t_{k+1}}\|y(t)\|^2\,\Big|\,\mathcal F_k\Big]
 \le\frac{\kappa_y}{d}\,\|\zeta(t_k)\|^2
\end{equation}
for $y\in\{a,b,h^\parallel\}$ (writing $\kappa_h$ for
$\kappa_{h^\parallel}$), and
\begin{equation}\label{eq:zeta_dev}
 \mathbb E\Big[\sup_{t_k\le t\le t_{k+1}}\|\zeta(t)-\zeta(t_k)\|^2\,\Big|\,\mathcal F_k\Big]
 \le\frac{\kappa_\delta}{d}\,\|\zeta(t_k)\|^2,
\end{equation}
where, with $C_h=\sqrt{1+\alpha^2\ell_f^2}$, $\gamma=\gamma(\tau)$ from
Lemma~\ref{lem:growth}, and
$\theta=1+\alpha^2\ell_f^2\big(1+C_h^2\gamma^2/4\big)$,
\begin{align}
 \kappa_a&=16(1+\theta\tau^2),\qquad
 \kappa_b=2\big(1+\alpha^2\beta^2\lambda_N^2\rho^2\kappa_a\big),\nonumber\\
 \kappa_z&=2\big(\rho\tau C_h^2\gamma^2+\beta^2\lambda_N^2\rho^2\kappa_a\big),\qquad
 \kappa_h=2\ell_f^2(1+\kappa_z),\nonumber\\
 \kappa_\delta&=\kappa_z+\alpha^2\beta^2\lambda_N^2\rho^2\kappa_a .
 \label{eq:kappas}
\end{align}
These constants depend on $(\alpha,\beta,\lambda_N,\ell_f,\rho,\tau)$
but not on $d$.
\end{lem}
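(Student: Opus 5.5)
The plan is to work within one holding interval $[t_k,t_{k+1})$ with $u_k$ fixed, and bootstrap from the refresh-time identity \eqref{eq:proj_moment}. By that identity, $\mathbb E[\|a(t_k)\|^2\mid\mathcal F_k]=\|z_2(t_k)\|^2/d$, and likewise for $b(t_k)$ and $h^\parallel(t_k)$. The task is then to show that the projected dynamics \eqref{eq:proj_dyn} cannot inflate these $O(1/d)$ quantities by more than a $d$-independent factor over an interval of length $\tau$, using only $\rho=d\tau$ and $d\tau^2$ as small parameters.

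\emph{Step 1: bound $a$ via the first-order system.} The key observation is that $\dot b=\alpha\beta d\hat La$ is $O(d)$, but it is integrated over time $\tau$, so its effect is $O(\rho)$. Concretely:
\begin{itemize}
\item I write $\dot a=-\beta d\hat La-b-\alpha h^\parallel$.
\item I treat $-\beta d\hat La$ as a contraction: $\frac{d}{dt}\tfrac12\|a\|^2\le-\beta d\,a^\top\hat La+\|a\|(\|b\|+\alpha\|h^\parallel\|)$, so $\|a\|$ grows at most by the integrated forcing.
\item I substitute $b(t)=b(t_k)+\alpha\beta d\hat L\int a$, which gives $\|b(t)\|\le\|b(t_k)\|+\alpha\beta\lambda_N\rho\sup\|a\|$.
\item For $h^\parallel$, I use $\|h^\parallel\|\le\|h_2\|\le\ell_f\|z\|\le\ell_f\gamma\|\zeta_k\|$ by Lemma~\ref{lem:growth}. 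This is only $O(1)$, not $O(1/\sqrt d)$, but it multiplies $\tau$, producing a term $\tau^2\ell_f^2\gamma^2\|\zeta_k\|^2\le(d\tau^2)(\cdot)\|\zeta_k\|^2/d$. This is the origin of the $\theta\tau^2$ in $\kappa_a$, and the reason the hypothesis $d\tau^2\le\dots$ is needed.
\end{itemize}
Squaring, using $(x+y)^2\le2x^2+2y^2$, and applying a Gronwall or absorption argument (absorbing $\alpha\beta\lambda_N\rho\tau\sup\|a\|$ using $d\tau^2\alpha\beta\lambda_N\le1$) should yield $\sup\|a\|^2\le16(\|a_k\|^2+\|b_k\|^2\tau^2+\theta\tau^2\|\zeta_k\|^2\cdot\tfrac{1}{d}\cdots)$. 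Taking conditional expectations with \eqref{eq:proj_moment} then gives $\kappa_a=16(1+\theta\tau^2)$.

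\emph{Step 2: bound $b$.} From the integral representation above, $\sup\|b\|^2\le2\|b_k\|^2+2\alpha^2\beta^2\lambda_N^2\rho^2\sup\|a\|^2$, which gives $\kappa_b$ immediately.

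\emph{Step 3: bound the deviations $z-z_k$ and $w_2-w_{2,k}$.} I integrate \eqref{eq:err_dyn}. The direction-dependent part of $\dot z_2$ is $-\beta d(\hat La)\otimes\hat u$, with integrated norm at most $\beta\lambda_N\rho\sup\|a\|$. The remaining part is bounded by $C_h\gamma\|\zeta_k\|$ in rate. The issue is that its integral is $\tau C_h\gamma\|\zeta_k\|$, whose square is $\tau^2C_h^2\gamma^2=\rho\tau C_h^2\gamma^2/d$. This explains $\kappa_z$. The $w_2$ deviation is $\alpha\beta\int d(\hat La)\otimes\hat u$, which adds $\alpha^2\beta^2\lambda_N^2\rho^2\kappa_a$ and gives $\kappa_\delta$.

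\emph{Step 4: bound $h^\parallel$.} The plan is $h^\parallel(t)=\hat u$-projection of $h_2(t_k)$ plus $(h_2(t)-h_2(t_k))$ projected. The first term has conditional mean square at most $\ell_f^2\|\zeta_k\|^2/d$ by \eqref{eq:proj_moment}. The second is bounded by $\ell_f\|z(t)-z_k\|$ via Lipschitzness, giving $\kappa_h=2\ell_f^2(1+\kappa_z)$.

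\emph{Main obstacle.} The main difficulty is Step 1. There is a circularity: $a$ is driven by $b$ and $h^\parallel$, $b$ is driven by $a$, and $h^\parallel$ depends on the unprojected state. The $O(1)$ non-projected forcing must be kept from destroying the $1/d$ scaling. My plan is to:
\begin{itemize}
\item bound $h^\parallel$ crudely by $\ell_f\gamma\|\zeta_k\|$ inside Step 1, paying $\tau^2$ and converting it to $1/d$ via $d\tau^2\le1$;
\item close the $a$–$b$ loop by a sup-norm fixed-point absorption, which requires $\alpha\beta\lambda_N\,d\tau^2$ to be small.
\end{itemize}
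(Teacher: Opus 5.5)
Your overall architecture is the right one: a sup-norm $A=\sup_s\|a(s)\|$, pathwise bounds, absorption, then conditional expectations via \eqref{eq:proj_moment}. Steps 2--4 are fine. Step 1 as you describe it, however, does not produce $\kappa_a=16(1+\theta\tau^2)$. Since $\kappa_b,\kappa_z,\kappa_h,\kappa_\delta$ are all built from $\kappa_a$, none of the stated constants follows. The culprit is the crude bound $\|h^\parallel\|\le\ell_f\gamma\|\zeta_k\|$. It is deterministic, so it adds $\tau\alpha\ell_f\gamma\|\zeta_k\|$ to $A$ and $\tau^2\alpha^2\ell_f^2\gamma^2\|\zeta_k\|^2$ to $\mathbb E_kA^2$, with no factor $1/d$ gained from the expectation. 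Rewriting it as $(d\tau^2)\alpha^2\ell_f^2\gamma^2\|\zeta_k\|^2/d$ puts a term of order $\rho\tau\,\alpha^2\ell_f^2\gamma^2$ into $\kappa_a$. That term does not vanish as $\tau\to0$ with $d\tau^2$ held fixed, whereas $\kappa_a-16=16\theta\tau^2$ does, so your constant is in general larger than the stated one. Your claim that this crude term is the origin of $\theta\tau^2$ is therefore mistaken. You would prove a variant with a different (still $d$-independent) $\kappa_a$, not the lemma as stated.

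The missing step is to use, already in Step 1, the decomposition you reserve for Step 4: $\|h^\parallel(s)\|\le\|h_k^\parallel\|+\ell_f\|z(s)-z_k\|$. Combine it with your Step 3 pathwise bound $\|z(s)-z_k\|\le sC_h\gamma\|\zeta_k\|+\beta\lambda_N d\,sA$. This produces three effects:
\begin{itemize}
\item The term $\tau\alpha\|h_k^\parallel\|$ has conditional mean square at most $\tau^2\alpha^2\ell_f^2\|z_k\|^2/d$; this is the $\alpha^2\ell_f^2$ inside $\theta$.
\item The twice-integrated term $\tfrac{\tau^2}{2}\alpha\ell_fC_h\gamma\|\zeta_k\|$ squares to order $\tau^4/4$ and becomes $\tau^2/(4d)$ via $d\tau^4\le\tau^2$. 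This is the $C_h^2\gamma^2/4$ inside $\theta$, and it is the actual role of $d\tau^2\le1$.
\item The new term $\tfrac{\tau^2}{2}\alpha\beta\ell_f\lambda_N dA$ joins the one coming from $b$, which is why the hypothesis carries the factor $1+\ell_f$.
\end{itemize}
Absorbing $\tfrac{\tau^2}{2}\alpha\beta\lambda_N(1+\ell_f)dA\le\tfrac12A$ gives $A\le2\big[\|a_k\|+\tau\alpha\|h_k^\parallel\|+\tau\|b_k\|+\tfrac{\tau^2}{2}\alpha\ell_fC_h\gamma\|\zeta_k\|\big]$. Then $(\sum_{i=1}^4y_i)^2\le4\sum_iy_i^2$ yields exactly the factor $16$.

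Your absorption as written also does not close. If you replace $ds$ by $\rho$ in $\|b(s)\|\le\|b_k\|+\alpha\beta\lambda_N d\,sA$, the coefficient of $A$ is $\alpha\beta\lambda_N d\tau^2$, and the bound $\alpha\beta\lambda_N d\tau^2\le1$ you invoke does not permit absorption at all. Keep the $s$, so that integration supplies the factor $\tfrac12$ that makes the coefficient at most $\tfrac12$ under the hypothesis.
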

\begin{proof}
Fix a realization of $u_k$ and write $s=t-t_k\in[0,\tau]$,
$\zeta_k=\zeta(t_k)$, and likewise $z_k,a_k,b_k,h_k^\parallel$. Let
$A=\sup_{s\in[0,\tau]}\|a(s)\|$; all suprema below are over
$s\in[0,\tau]$.

\emph{Step 1: pathwise bounds in terms of $A$.} From
\eqref{eq:err_dyn}, $\|\hat La\|\le\lambda_N\|a\|$, and
$\|(\hat La)\otimes\hat u\|=\|\hat La\|$,
\begin{align}
 \|w_2(s)-w_{2,k}\|&\le\alpha\beta\lambda_N d\,sA,\label{eq:w_dev}\\
 \|z(s)-z_k\|&\le\int_0^s\!\big(\alpha\|\bar h\|+\|w_2\|\big)
 +\beta\lambda_N d\,sA\nonumber\\
 &\le sC_h\gamma\|\zeta_k\|+\beta\lambda_N d\,sA,\label{eq:z_dev}
\end{align}
where we used $\alpha\|\bar h\|+\|w_2\|\le\alpha\ell_f\|z\|+\|w_2\|
\le C_h\|\zeta\|\le C_h\gamma\|\zeta_k\|$ by Lemma~\ref{lem:growth}.
Since $\|y^\parallel\|\le\|y\|$ and
$\|h_2(s)-h_{2,k}\|\le\|\bar h(s)-\bar h_k\|\le\ell_f\|z(s)-z_k\|$,
\begin{align}
 \|b(s)\|&\le\|b_k\|+\alpha\beta\lambda_N d\,sA,\label{eq:b_path}\\
 \|h^\parallel(s)\|&\le\|h_k^\parallel\|+\ell_f\|z(s)-z_k\|
 \nonumber\\
 &\le\|h_k^\parallel\|+\ell_f sC_h\gamma\|\zeta_k\|+\beta\ell_f\lambda_N d\,sA .
 \label{eq:h_path}
\end{align}

\emph{Step 2: closing the bound on $A$.} From \eqref{eq:proj_dyn} and
$a^\top\hat La\ge0$, $\tfrac{d}{ds}\|a\|\le\alpha\|h^\parallel\|+\|b\|$
wherever $a\ne0$. Integrating and inserting \eqref{eq:b_path} and
\eqref{eq:h_path},
\begin{align*}
 A\le{}&\|a_k\|+\tau\big(\alpha\|h_k^\parallel\|+\|b_k\|\big)
 +\tfrac{\tau^2}{2}\alpha\ell_fC_h\gamma\|\zeta_k\|\\
 &+\tfrac{\tau^2}{2}\alpha\beta\lambda_N(1+\ell_f)\,d\,A .
\end{align*}
By hypothesis $\tfrac{\tau^2}{2}\alpha\beta\lambda_N(1+\ell_f)d\le\tfrac12$,
so
\begin{equation}\label{eq:A_bound}
 A\le2\Big[\|a_k\|+\tau\alpha\|h_k^\parallel\|+\tau\|b_k\|
 +\tfrac{\tau^2}{2}\alpha\ell_fC_h\gamma\|\zeta_k\|\Big].
\end{equation}

\emph{Step 3: expectations.} Write $\mathbb E_k=\mathbb E[\,\cdot\mid\mathcal F_k]$.
The quantities $a_k,b_k,h_k^\parallel$ are projections of the
$\mathcal F_k$-measurable vectors $z_{2,k},w_{2,k},h_{2,k}$ onto the
fresh direction, so by \eqref{eq:proj_moment} and
$\|h_{2,k}\|\le\ell_f\|z_k\|$,
\begin{align*}
 \mathbb E_k\|a_k\|^2=\frac{\|z_{2,k}\|^2}{d},\quad
 \mathbb E_k\|b_k\|^2=\frac{\|w_{2,k}\|^2}{d},\\
 \mathbb E_k\|h_k^\parallel\|^2\le\frac{\ell_f^2\|z_k\|^2}{d}.
\end{align*}
Squaring \eqref{eq:A_bound} with $(\sum_{i=1}^4 y_i)^2\le 4\sum_i y_i^2$
and taking $\mathbb E_k$,
\begin{align*}
 \mathbb E_kA^2\le\frac{16}{d}\Big[&\|z_{2,k}\|^2
 +\tau^2\alpha^2\ell_f^2\|z_k\|^2+\tau^2\|w_{2,k}\|^2\\
 &+\tfrac{d\tau^4}{4}\alpha^2\ell_f^2C_h^2\gamma^2\|\zeta_k\|^2\Big].
\end{align*}
Using $d\tau^4\le\tau^2$ (from $d\tau^2\le1$) and
$\|z_{2,k}\|,\|z_k\|,\|w_{2,k}\|\le\|\zeta_k\|$,
\begin{equation*}
 \mathbb E_kA^2\le\frac{16}{d}\Big[1+\tau^2\Big(1+\alpha^2\ell_f^2
 +\tfrac{\alpha^2\ell_f^2C_h^2\gamma^2}{4}\Big)\Big]\|\zeta_k\|^2
 =\frac{\kappa_a}{d}\|\zeta_k\|^2,
\end{equation*}
which is \eqref{eq:proj_sup} for $y=a$.

For $y=b$, square \eqref{eq:b_path} with $s\le\tau$ using
$(y_1+y_2)^2\le2y_1^2+2y_2^2$:
\begin{align*}
 \mathbb E_k\sup\|b\|^2&\le\frac{2\|w_{2,k}\|^2}{d}
 +2\alpha^2\beta^2\lambda_N^2d^2\tau^2\,\mathbb E_kA^2\\
 &\le\frac{2(1+\alpha^2\beta^2\lambda_N^2\rho^2\kappa_a)}{d}\|\zeta_k\|^2
 =\frac{\kappa_b}{d}\|\zeta_k\|^2 .
\end{align*}
For the state deviation, square \eqref{eq:z_dev}:
\begin{align*}
 \mathbb E_k\sup\|z(s)-z_k\|^2
 &\le2\tau^2C_h^2\gamma^2\|\zeta_k\|^2+2\beta^2\lambda_N^2d^2\tau^2\,\mathbb E_kA^2\\
 &\le\frac{\kappa_z}{d}\|\zeta_k\|^2,
\end{align*}
where we wrote $2\tau^2C_h^2\gamma^2=2\rho\tau C_h^2\gamma^2/d$.
For $y=h^\parallel$, square the first inequality in \eqref{eq:h_path}:
\begin{align*}
 \mathbb E_k\sup\|h^\parallel\|^2&\le\frac{2\ell_f^2\|z_k\|^2}{d}
 +2\ell_f^2\,\mathbb E_k\sup\|z(s)-z_k\|^2\\
 &\le\frac{2\ell_f^2(1+\kappa_z)}{d}\|\zeta_k\|^2
 =\frac{\kappa_h}{d}\|\zeta_k\|^2 .
\end{align*}
Finally, $\|\zeta(s)-\zeta_k\|^2=\|z(s)-z_k\|^2+\|w_2(s)-w_{2,k}\|^2$
exactly, so by \eqref{eq:w_dev},
\begin{align*}
 \mathbb E_k\sup\|\zeta(s)-\zeta_k\|^2
 &\le\frac{\kappa_z}{d}\|\zeta_k\|^2
 +\alpha^2\beta^2\lambda_N^2d^2\tau^2\,\mathbb E_kA^2\\
 &\le\frac{\kappa_z+\alpha^2\beta^2\lambda_N^2\rho^2\kappa_a}{d}\|\zeta_k\|^2,
\end{align*}
which is \eqref{eq:zeta_dev} with $\kappa_\delta$ as in \eqref{eq:kappas}.
\end{proof}
\end{document}